\documentclass{ieeeaccess}
\usepackage{cite}
\usepackage{amsmath,amssymb,amsfonts}
\usepackage{graphicx}
\usepackage{textcomp}
\def\BibTeX{{\rm B\kern-.05em{\sc i\kern-.025em b}\kern-.08em
    T\kern-.1667em\lower.7ex\hbox{E}\kern-.125emX}}

\usepackage{dsfont}
\usepackage{amsthm}
\usepackage{color}
\usepackage{algorithm}
\usepackage{algpseudocode} 
\usepackage{comment}
\usepackage{subfig}
\usepackage{scalerel}
\usepackage{bbm}

\usepackage{siunitx}
\usepackage{textcase}

\def\NoNumber#1{{\def\alglinenumber##1{}\State #1}\addtocounter{ALG@line}{-1}}
\definecolor{mygreen}{RGB}{0,204,102}

\definecolor{gold}{RGB}{245,155,0}







\def\Amc{\mathcal{A}}

\def\Emc{\mathcal{E}}

\def\Gmc{\mathcal{G}}
\def\Hmc{\mathcal{H}}

\def\Kmc{\mathcal{K}}

\def\Nmc{\mathcal{N}}

\def\Smc{\mathcal{S}}
\def\Tmc{\mathcal{T}}

\def\Vmc{\mathcal{V}}









\newtheorem{problem}{Problem}
\newtheorem{proposition}{Proposition}
\newtheorem{theorem}{Theorem}
\newtheorem{lemma}{Lemma}
\newtheorem{definition}{Definition}
\newtheorem{assumption}{Assumption}
\newtheorem{remark}{Remark}

%







\newcommand{\angb}[1] {\left<{#1}\right>}


\def \ones			{{\mathds{1}}} 

\begin{document}
\history{Date of publication xxxx 00, 0000, date of current version xxxx 00, 0000.}
\doi{10.1109/ACCESS.2023.0322000}

\title{Adaptive Consensus-based Reference Generation for the Regulation of Open-Channel Networks}
\author{Marco Fabris\authorrefmark{1}  \IEEEmembership{Member, IEEE}, Marco D. Bellinazzi\authorrefmark{1}, Andrea Furlanetto\authorrefmark{2}, and Angelo Cenedese\authorrefmark{1} \IEEEmembership{Senior Member, IEEE},
}

\address[1]{
Department of Information Engineering at the University of Padova, Padova, 35131. A.~Cenedese is also with the Institute of Electronics, Information and Telecommunication Engineering, National Research Council. (e-mails: marco.fabris.1@unipd.it, marcodavide.bellinazzi@studenti.unipd.it, angelo.cenedese@unipd.it)
}
\address[2]{
Consorzio Bonifica Veneto Orientale. (e-mail: andrea.furlanetto@bonificavenetorientale.it)}
\tfootnote{
}

\markboth
{M. Fabris \headeretal: Adaptive-Consensus-based Regulation of Open-Channel Networks}
{M. Fabris \headeretal: Adaptive-Consensus-based Regulation of Open-Channel Networks}

\corresp{Corresponding author: Angelo Cenedese (e-mail: angelo.cenedese@unipd.it).}

\begin{abstract}
This paper deals with water management over open-channel networks (OCNs) subject to water
height imbalance. The OCN is modeled by means of graph theoretic tools and a regulation scheme is designed basing on an outer reference generation loop for the whole OCN and a set of local controllers. 
Specifically, it is devised a fully distributed adaptive consensus-based algorithm within the discrete-time domain capable of (i) generating a suitable tracking reference that stabilizes
the water increments over the underlying network at a common level; (ii) coping with general flow
constraints related to each channel of the considered system. This iterative procedure is derived by
solving a guidance problem that guarantees to steer the regulated network - represented as a closed-loop
system - while satisfying requirements (i) and (ii), provided that a suitable design for the local
feedback law controlling each channel flow is already available. The proposed solution converges
exponentially fast towards the average consensus thanks to a Metropolis-Hastings design of the network parameters without violating the imposed constraints over
time. In addition, numerical results are reported to support the theoretical findings, and the
performance of the developed algorithm is discussed in the context of a realistic scenario.
\end{abstract}

\begin{keywords}
	Regulation of Open-Channel Networks, Discrete-time Consensus, Metropolis-Hastings Weight Design
\end{keywords}

\titlepgskip=-21pt

\maketitle

\section{Introduction}
\label{sec:introduction}
Water networks are complex large-scale systems comprising diverse components including transport structures (e.g. open channels, pipelines), flow control units, and storage apparatuses. 
In particular, open-channel networks (OCNs) serve several purposes such as draining rainwater outside urbanized areas in order to avoid flooding and ensuring an appropriate water supply for the irrigation of farmlands. 
As the frequency, intensity, and duration of storm events have increased worldwide because of climate change~\cite{XuRamanathanVictor2018,Loiy2019,LiSunHuang2020}, OCNs have proven unable to handle severe flood phenomena~\cite{PapalexiouMontanari2019} or water shortages and droughts~\cite{Vicente-SerranoQuiringPenaGallardo2020}.
Therefore, the analysis of such systems and the design of advanced control techniques to improve their management have recently become an objective of major impact and interest within the research community.

Different approaches can be developed for the water distribution problem because of the several aspects to be considered, such as cost minimization, control optimization, supply or allocation issues, leak management. 
The vast majority of solutions 
encompasses either traditional numerical modeling techniques~\cite{Szymkiewicz2010} or machine learning tools and graph-based algorithms.
As machine learning is often exploited for quality and maintenance-related issues (e.g. leak and infiltration assessment~\cite{w12041153}), graph theory is better employed for more operational planning problems.
Examples of graph theory applied to water network distribution problems are very common and used to solve different tasks. For example, in \cite{djikstra}, graph theory is used to devise an algorithm to manage the scheduling of the water network, and this 
returns an optimal minimal cost pump-scheduling pattern; more thoroughly, in \cite{w10010045} it is introduced a holistic analysis framework to support water utilities on the decision making process for efficient supply management.
Furthermore, within graph theory, it is common opinion (see \cite{10.1007/978-3-642-36071-8_1},\cite{10.1145/2678280}) that theoretical computer science lays on a vantage point for the understanding of key emergent properties in complex interconnected systems.\\ 
Overall, these trends highlight that one of the most challenging aspect within the regulation of OCNs is to find a \textit{viable}, \textit{efficient}, \textit{topology-independent} and \textit{distributed} method that can be used to solve water distribution problems pertaining to this category of networks. In this perspective, we shall model such problems and the corresponding solutions by exploiting the typical approaches employed with networked systems.

Given this premise, we build a novel solution to balance water levels in OCNs subject to floods or shortages upon the well-known consensus protocol.
In a network of agents, ``consensus'' means an agreement regarding a certain quantity of interest that usually depends on the state of all agents. 
A consensus algorithm (or protocol) is an interaction rule that specifies the information exchange between an agent and its neighbors on the network~\cite{HERLIHY2021129}.
Consensus can be applied for multiple purposes, as discussed in~\cite{4118472}, where a general theoretical framework is provided. 
Relevant to our study is~\cite{9265352}, where a consensus-based control strategy for a water distribution system is proposed. This enables a water system to continuously supply the demand while minimizing the impact of faulty equipment within the water distribution facilities.

Constraints are another aspect that consensus theory takes into account, as in~\cite{YANG2014499}, where it is considered the global consensus problem for discrete-time multi-agent systems with input saturation constraints under fixed undirected topologies. 
Due to the existence of different kinds of state constraints, most existing consensus algorithms cannot be applied directly (see e.g.~\cite{7170935}, where the state is confined in an interval around the initial conditions in order to obtain convergence on opinion dynamics and containment control). 
Hence, these distributed protocols need to be designed depending on the specific application challenge.

\textbf{Contributions}:  For the above reasons, a novel and versatile consensus algorithm is here presented to face the water level compensation subject to flow constraints in OCNs, which in fact translate into restrictions on the state variation \cite{cassan2021hydrostatic}.
Related studies concerning water distribution issues over networks can be already found in~\cite{ZamzamDallAneseZhao2019,SinghKekatos2020,SinghKekatos2021}, where power or price-based costs are minimized in order to obtain control solutions capable of ensuring optimal governance of the active elements in the underlying network. However, these strategies are not intended to operate on OCNs and deal with water leveling and flow constraints.
Differently from the methods of these works, here it is proposed an autonomous time-varying difference-equation-based model and the related regulation scheme that accounts for general flow constraints. 
In particular, we encapsulate the handling of physical restrictions affecting each waterway flow into a modified version of the classic distributed consensus protocol. Then, this approach can be carried out through the solution to a guidance problem that seeks a feasible decentralized control reference for the water exchange among the channels of the given network in order to attain a common level increment.

The main contribution of this work is thus devoted to the development of a distributed algorithm satisfying the above requirements. With the aim of optimizing water distribution in an OCN, the proposed strategy is capable of (i) allowing for the allocation of even amounts of water, in terms of height increments, over the underlying network; (ii) coping with general flow constraints associated to each one of the considered channels.
These two aspects are accommodated by resorting to the adaptation of the classic average consensus to the specific framework of interest and introducing a time-variant adjustment on the protocol, so that different water regimes occurring at each channel can be managed while reaching an agreement dictated by the mean of the initial water height increments.
Remarkably, beyond the specific application to OCNs, the proposed solution is shown to have wider application properties, namely it can be used in more general problems and  topologies, as it is designed to fit any networked systems in which the discrete-time average consensus dynamics is demanded to account for a limited capacity of information exchange.
A further contribution of this paper is indeed represented by the general exploration via Lyapunov-based convergence analysis of the devised distributed algorithm. As a result, an analytical metric for the convergence rate is also suggested.
Lastly, numerical simulations on the realistic scenario offered by the Cavallino water network (see Fig. \ref{fig:cavallinowaternetwork}) situated in the Venice metropolitan area, Italy, are reported to validate the presented theoretical findings.

\begin{figure}[t!]
	\centering
	\includegraphics[scale=0.23]{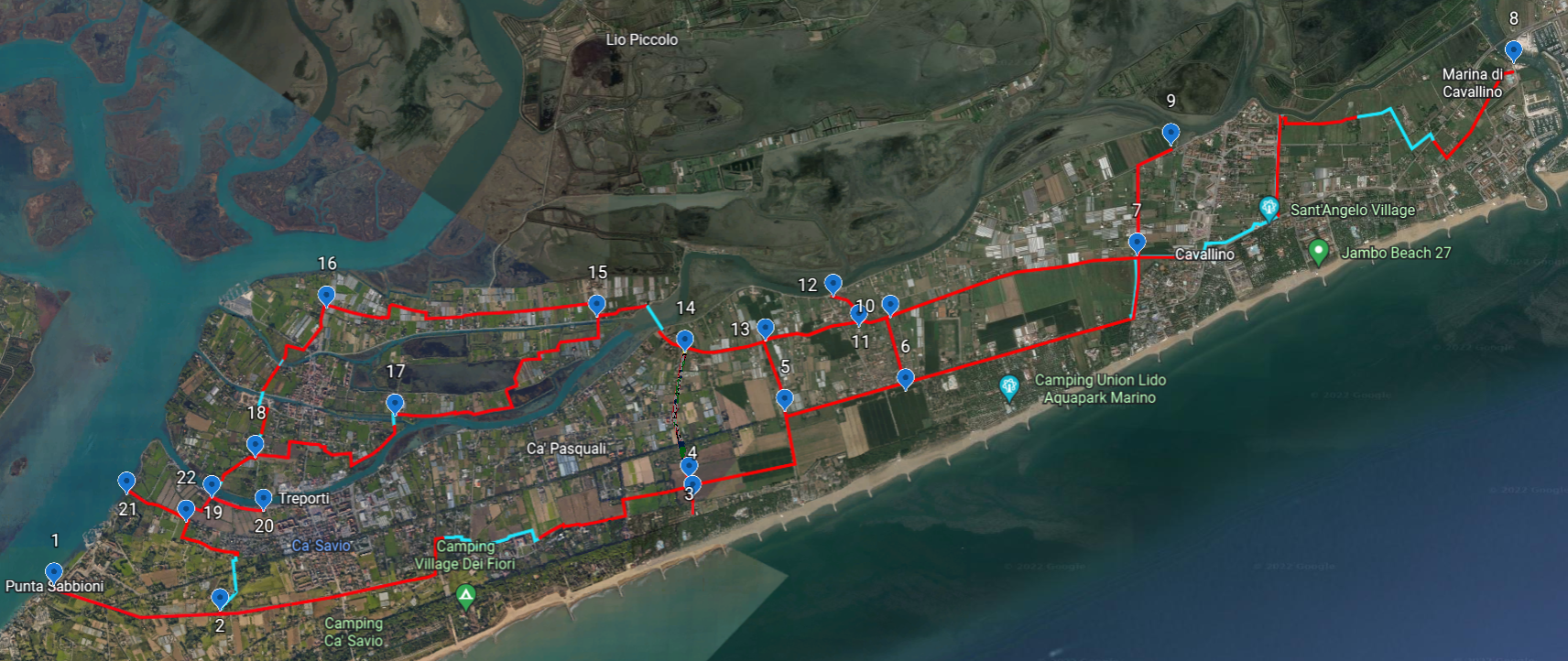}
	\caption{View of the Cavallino OCN.}
	\vspace{-6pt}
	\label{fig:cavallinowaternetwork}
\end{figure}

\textbf{Paper organization}: The remainder of this paper is organized as follows. 
In Sec.~\ref{sec:preliminaries}, we introduce mathematical preliminaries and briefly review the consensus theory. Sec.~\ref{chp:modeldesign} describes the setup for which our consensus-based reference generation protocol for even water compensation is proposed. 
To this purpose, its dynamics is presented along with a method that guarantees to avoid constraint violations corresponding to water flow limitations imposed on the network.
Then, Sec.~\ref{chp:models} yields an iterative distributed procedure for the implementation of the consensus protocol previously introduced. In relation to this, a Lyapunov-based convergence analysis is also provided in the Appendix to prove the effectiveness and the correctness of the aforementioned algorithm. 
Sec.~\ref{sec:numericalsimulations} is devoted to the results of our numerical simulations. 
Finally, Sec.~\ref{chp:conclusion} concludes our work, discussing future research directions.


\section{Preliminaries}\label{sec:preliminaries}

In this section, the preliminary notions and assumptions to model OCNs are reported. 
Also, the discrete-time consensus protocol is briefly reviewed.

\subsection{Basic notation}\label{ssec:basicnotation}
Hereafter, symbols $\mathbb{N}$, $\mathbb{R}$, $\mathbb{R}_{\geq0}$ and $\mathbb{R}_{>0}$ denote the sets of natural, real, nonnegative real, and positive real numbers.  
Both letters $k$ and $l$ indicate discrete time instants, while $t$ and $\tau$ refer to continuous time. 
With $\mathrm{Log}$ and $\mathrm{sign}$, the \textit{base-$10$ logarithm} and the \textit{sign} functions are meant.
The following notation is quite standard in linear algebra \cite{schott2016matrix}.
Given a vector $\varpi \in \mathbb{R}^{N}$ comprising of the components $\varpi_i$, with $i = 1,\ldots,N$, its \textit{infinity norm} and its \textit{span} are respectively denoted by $\left\|\varpi\right\|_{\infty}$ and $\left< \varpi \right>$. 
Given a matrix $\Omega \in \mathbb{R}^{N\times N}$, its $ij$-th entry is indicated with $[\Omega]_{ij}$ and its \textit{eigenvalues} are denoted by $\lambda^{\Omega}_{i}$, for $i = 0,\ldots,N-1$; also, by $|\Omega|$, we mean the (entry-wise) absolute value of matrix $\Omega$. 
A matrix $\Omega$ having nonnegative entries is \textit{row-stochastic} if each row sums to $1$,  $\Omega \in \mathrm{stoch}(\mathbb{R}^{n \times n})$; it is \textit{doubly-stochastic}, if it is \textit{row-stochastic} and also each column entries sum to $1$, $\Omega \in \mathrm{stoch}^{2}(\mathbb{R}^{n \times n})$.
Then, we indicate with $I_N \in \mathbb{R}^{N \times N}$ and $\ones_{N}\in \mathbb{R}^N$ the \textit{identity matrix} and the \textit{agreement vector} of dimension $N$, respectively. 
Moreover, symbols $|\Smc|$, $\top$, $\propto$ and $s$ denote respectively the \textit{cardinality} of set $\Smc$, the \textit{transpose} operation for matrices, the \textit{direct proportionality} relation and \textit{complex frequency variable} for continuous-time transfer functions.
For a continuous scalar function $f : \mathbb{R} \rightarrow \mathbb{R} : t \mapsto f(t)$ sampled with period $T_s>0$, the \textit{difference quotient} of $f$ over $T_s$ at $k$ is defined as $\delta f^{[T_s]}(k) = [f(kT_s+T_s)-f(kT_s)]/T_s$; the \textit{shift operator} is denoted with $z$, so that for all $\tau \in \mathbb{R}_{\geq 0}$ one has $f(\tau + T_s ) = z f(\tau)$. Lastly, the operator $\nabla$ is used to denote the gradient of a differentiable function.

\subsection{Graph-based OCN model}

In this work, we account for bidirectional and interconnected OCNs comprised of $n\geq 2$ channels and $m\geq 3$ junctions. The latter are of two types, and they can either link a pair of subsequent channels or simply represent an endpoint for the water system. 
A water network of this kind can be thus modeled as a \textit{graph} $\Gmc^{o} = (\Vmc^{o},\Emc^{o})$ ~\cite{GrossYellen2018}, wherein each element $v^{o}_{i}$ in the \textit{vertex set} $\Vmc^{o} = \{v^{o}_{1}, \ldots, v^{o}_{m}\}$ corresponds to a junction, and the \textit{edge set} $\Emc^{o} \subseteq \Vmc^{o} \times \Vmc^{o}$ describes each (undirected) channel\footnote{By assumption, channels are characterized by a bidirectional water flow. No restriction is imposed on the presence of hydraulic pumps capable of directing the streams along both the ways.}. Under this premise, there exists an edge $e^{o}_{\ell} := e^{o}_{ij} \in \Emc^{o}$, with $\ell = 1,\ldots,n$ and $i<j$, if and only if there exists a channel linking junctions $v^{o}_{i}$ and $v^{o}_{j}$, implying that $\Gmc^{o}$ is \textit{undirected}. In addition, it is assumed that $\Gmc^{o}$ is \textit{connected}, that is there exists a path $(e^{o}_{\ell_{i}}, \ldots, e^{o}_{\ell_{j}})$ connecting any two distinct nodes $(v_{i}^{o},v_{j}^{o}) \in \Vmc^{o} \times \Vmc^{o}$. 

Let $\mathrm{L}$ denote the \textit{line graph} operator that maps a given graph $\Hmc$ into its \textit{adjoint} $\mathrm{L}(\Hmc)$ ~\cite{Harary1960,BeinekeWilson1978}.
In order to operate with the regulation procedure proposed in this paper, the adjoint $\Gmc := \mathrm{L}(\Gmc^{o}) $ of the given topology $\Gmc^{o}$ is constructed and considered. More precisely, letting $\Gmc = (\Vmc,\Emc)$, the nodes in the (adjoint) vertex set $\Vmc = \{v_{1},\ldots,v_{n}\}$ represent each of the channels, i.e. $v_{\ell} = e^{o}_{\ell}$, for all $\ell = 1,\ldots,n$. Also, there exists an edge $e_{ij}$ in the (adjoint) edge set $\Emc \subseteq \Vmc \times \Vmc$ if and only if channels $v_{i}$ and $v_{j}$ are both incident to one of the junctions in $\Vmc^{o}$. It is well-known that if a graph $\Hmc$ is connected so is $L(\Hmc)$; hence, $\Gmc$ is connected. Furthermore, denoting with $\Nmc^{o}_{i} = \{j  ~|~ (v^{o}_i,v^{o}_j) \in \Emc^{o} \}$ the $i$-th \textit{neighborhood} of junction $i$ and with $d_{i}^{o} = |\Nmc^{o}_{i}|$ the \textit{degree} of junction $i$, the cardinality of the (adjoint) edge set is yielded by $|\Emc| = -n + \frac{1}{2} \sum_{i=1}^{m}(d^{o}_{i})^{2}$, as shown in ~\cite{Harary1969}. Similarly, we define the (adjoint) $i$-th neighborhood of channel $i$ and its corresponding (adjoint) degree as $\Nmc_{i} = \{j ~|~ (v_i,v_j) \in \Emc \}$ and $d_{i} = |\Nmc_{i}|$, respectively. Also, defining the \textit{adjacency matrix} of $\Gmc^{o}$ as $A^{o} \in \mathbb{R}^{m \times m}$ such that for each $v_{i}\in \Vmc$ it is assigned $[A^{o}]_{ij} = 1$, if $j \in \Nmc^{o}_{i}$; $[A^{o}]_{ij} = 0$, otherwise; and the \textit{incidence matrix} of $\Gmc^{o}$ as $E^{o} \in \mathbb{R}^{m \times n}$ such that for each $e_{\ell}^{o} = e_{ij}^{o} \in \Emc^{o}$ it is assigned $[E^{o}]_{\upsilon \ell} = -1$, if $\upsilon=i$; $[E^{o}]_{\upsilon \ell} = 1$, if $\upsilon=j$; $[E^{o}]_{\upsilon \ell} = 0$, otherwise. 
With these positions, it can be derived that the adjacency matrix $A \in \mathbb{R}^{n\times n}$ of $\Gmc = \mathrm{L}(\Gmc^{o})$, which is yielded by $A = |(E^{o})^{\top}E^{o} -2I_{n}|$. Moreover, we let $\overline{\Nmc}_{i} := \Nmc_{i} \cup \{i\}$, $d_{m} := \min_{i =1,\ldots,n} \{d_{i}\}$, $d_{M} := \max_{i =1,\ldots,n} \{d_{i}\}$ be respectively the extended $i$-th neighborhood, minimum degree and maximum degree in $\Gmc$. 
Lastly, the radius and diameter\footnote{The eccentricity $\varepsilon(v_{i})$ of a vertex $v_{i}$ in a connected graph $\Hmc$ is the maximum graph distance between $v_{i}$ and any other vertex $v_{j}$ of $\Hmc$.
	The radius of a graph is the minimum eccentricity of any of its vertices, while the diameter of a graph is the maximum eccentricity of any of its vertices.} of $\Gmc$ are denoted by $\rho$ and $\phi$, respectively.

\subsection{Review of the consensus protocol}
We now provide an overview of the discrete-time weighted consensus problem in the field of multi-agent systems  (see also \cite{MesbahiEgerstedt2010,Lunze2019} for more details on the topic). Let us consider a group of $n$ homogeneous agents, e.g. the $n$ pairs of actuators installed at the two endpoints of each channel in a water system modeled by an undirected and connected (adjoint) graph $\Gmc$. 
Let us also assign a discrete-time state $x_{i}(k) \in \mathbb{R}$ to the $i$-th agent, for $i = 1,\dots,N$, with $k=0,1,2,\ldots$. The full state of the whole network can be thus expressed by $x(k) = \begin{bmatrix} x_{1}(k) & \cdots & x_{n}(k) \end{bmatrix}^{\top} \in \mathbb{R}^{n}$. The discrete-time consensus within a multi-agent system can be characterized as follows.
\begin{definition}[Discrete-time consensus\textcolor{white}{,}~\cite{Bullo2019}]\label{def:consensus}
	{An $n$-agent network achieves \emph{consensus} if $\lim_{k\rightarrow+\infty} x(k) \in \Amc $, where $\Amc = \angb{\ones_{n}} $ is called the \emph{agreement set}. Moreover, if for all $i=1,\ldots,n$ it holds that $\lim_{k\rightarrow+\infty} x_{i}(k) = n^{-1} \sum_{j=1}^{n} x_{j}(0)$ then 
 \textit{average consensus} is attained.
	}
\end{definition}

Let us consider a connected graph $\Gmc=(\Vmc,\Emc)$ in which it is assigned a \textit{weight} $p_{ij} \in (0,1)$ to each edge $e_{ij} \in \Emc$ and it is assigned a \textit{self-loop} $p_{ii} \in [0,1)$ to each node $v_{i} \in \Vmc$, such that $\sum_{j\in \overline{\Nmc}_{i}} p_{ij} = 1$ for all $i=1,\ldots,n$. Let us define the \textit{update matrix} $P$ as $[P]_{ij}=p_{ij}$, if $(i,j) \in \Emc$ or $i=j$ and $[P]_{ij} = 0$, otherwise, such that $P \in \mathrm{stoch}(\mathbb{R}^{n \times n})$. 
It is well known that the \textit{linear discrete-time consensus protocol}
\begin{equation}\label{eq:LAP}
	x(k+1) = P x(k)
\end{equation}
drives the ensemble state $x(k)$ to the agreement set if at least one of the self-loops $p_{ii}$ is chosen to be strictly positive~\cite{Bullo2019}.

In many frameworks, the consensus protocol \eqref{eq:LAP} is required to perform the arithmetic mean of the initial conditions. 
For this purpose, the update matrix is usually designed to be doubly-stochastic, namely it is imposed that $P \in \mathrm{stoch}^{2}(\mathbb{R}^{n \times n})$; an example of such design can be obtained by following the Metropolis-Hastings (MH) structure in which coefficients $[P]_{ij} = p_{ij}$ are assigned as
\begin{equation}\label{metro} 
	p_{ij} := \begin{cases}
		(1+ \max(d_{i},d_{j}))^{-1}, \quad &\forall j \in \Nmc_{i}; \\
		0, \quad &\forall j \notin \overline{\Nmc}_{i} ; \\
		1- \sum\limits_{j \in \Nmc_{i}} p_{ij}, \quad &\text{otherwise}.
	\end{cases}
\end{equation}

By the Gershgorin's disk theorem~\cite{Bell1965}, $P$ has $n$ real eigenvalues $ \lambda_{n-1}^{P} \leq \cdots \leq \lambda_{1}^{P} \leq \lambda_{0}^{P} = 1$  belonging to the interval $[-1, 1]$. 
In addition, as shown in ~\cite{6854643}, it is known that this design of $P$ ensures average consensus for protocol \eqref{eq:LAP}, as it is guaranteed that $\lambda_{n-1}^{P}>-1$ and $\lambda_{1}^{P} < 1$.

\section{The role of consensus dynamics in the regulation of open-channel networks}
\label{chp:modeldesign}

The following paragraphs are devoted to the formulation of the regulation problem for an OCN. In particular, it is highlighted how the consensus dynamics can be exploited to provide an autonomous reference to balance the channels' water levels in the system.

\subsection{Setup and problem formulation}\label{ssec:setup}

\begin{figure}[b!]
	\centering
	\vspace{-6pt}
	\includegraphics[scale=0.18]{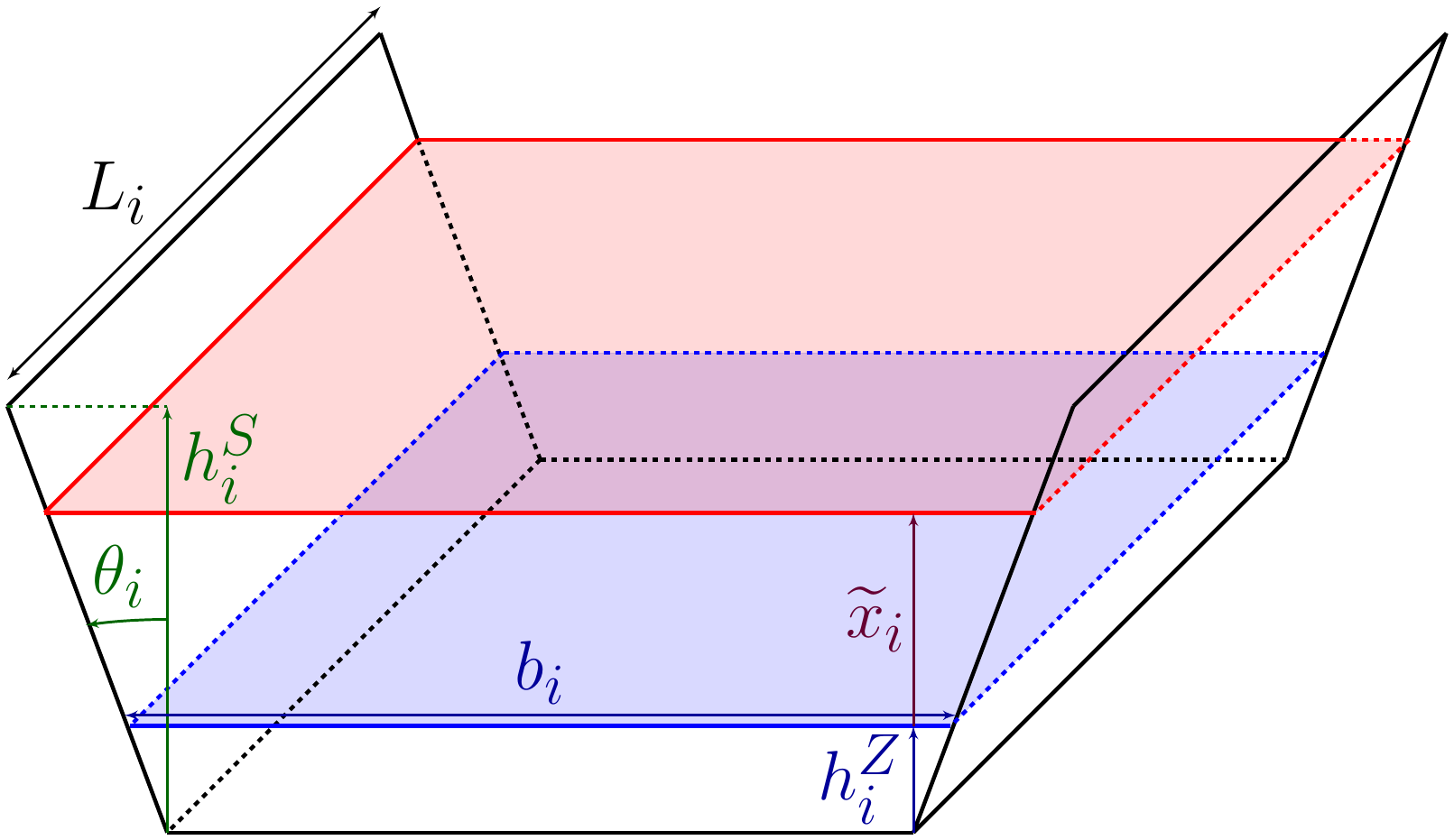}
	\caption{Geometric characterization of the $i$-th channel. The volume (increment) $V_{i}$ in \eqref{eq:volumedef} is considered as the portion of space between the blue and red surfaces. In case of $\widetilde{x}_{i}<0$ and $V_{i}<0$, the red surface lays below the blue one.}
	\label{fig:channel_dimensions}
\end{figure}

Let us consider the water channels $i=1,\ldots,n$ in $\Gmc$. As illustrated in Fig. \ref{fig:channel_dimensions}, we assume that the $i$-th channel has length $L_{i}$, trapezoidal cross section characterized by height $h_{i}^{S}>0$ and bank slope $\theta_{i} \in (0,\pi/2)$, so that $\theta_{i}$ approaching zero corresponds to having vertical banks. The $i$-th zero water level reference is chosen at height $h_{i}^{Z} \in [0,h^{S}_{i}]$, and at that level, the width of the $i$-th cross section is given by $b_{i} > 0$. Let $\widetilde{x}_{i} \in [-h_{i}^{Z},h_{i}^{S}-h_{i}^{Z}]$ be the variation of water level from reference $h^{Z}_{i}$ to $(h^{Z}_{i} +\widetilde{x}_{i}) \in [0, h_{i}^{S}]$. Then the volume variation $V_{i}(\widetilde{x}_{i})$ causing a height increment $\widetilde{x}_{i}$ is given by
\begin{equation}\label{eq:volumedef}
	V_{i} = L_{i}b_{i}\widetilde{x}_{i} + L_{i} \tan(\theta_{i}) \widetilde{x}_{i}^{2}.
\end{equation}
Because of \eqref{eq:volumedef}, since $\widetilde{x}_{i}$ is upper bounded by $\overline{x}_{i} = h^{S}_{i}-h^{Z}_{i}$, also $V_{i}$ is upper bounded by $\overline{V}_{i} =L_{i}b_{i}\overline{x}_{i} + L_{i} \tan(\theta_{i}) \overline{x}_{i}^{2} $; and since $\widetilde{x}_{i}$ is lower bounded by $\underline{x}_{i} = -h^{Z}_{i}$, also $V_{i}$ is lower bounded by $\underline{V}_{i} =L_{i}b_{i}\underline{x}_{i} + L_{i} \tan(\theta_{i}) \underline{x}_{i}^{2} $.
Consequently, by inverting \eqref{eq:volumedef}, we find the dependence of increment $\widetilde{x}_{i}$ w.r.t. the corresponding volume variation $V_{i}$, that is,
\begin{equation}\label{eq:heightdef}
	\widetilde{x}_{i} = (a_{i}^{\prime}+b_{i}^{\prime}V_{i})^{1/2}-c_{i}^{\prime},
\end{equation}
where $c_{i}^{\prime} = b_{i}/(2\tan (\theta_{i}))$, $b_{i}^{\prime} = (L_{i}\tan (\theta_{i}))^{-1}$ and $a_{i}^{\prime} = (c_{i}^{\prime})^{2}$ are positive constants depending on the geometry of the $i$-th channel. 

Now, let us consider the difference quotients $\delta \widetilde{x}_{i}^{[T_s]}(k)$, $\delta V_{i}^{[T_s]}(k)$ at time $k$ and define the \textit{download and upload limitations} for the $i$-th water flow rate as the functions 
\begin{equation}\label{eq:Cfunctions}
	C^{J}_{i} : \mathbb{N} \rightarrow \mathbb{R}_{>0} : k \mapsto C^{J}_{i}(k), ~ J\in \{D,U\},
\end{equation}
with each $C_{i}^{J}(k)$ bounded from above. By means of \eqref{eq:Cfunctions}, the following flow rate constraint can be taken into account:
\begin{equation}\label{eq:volumeconstraints}
	-C_{i}^{D}(k) \leq \delta V_{i}^{[T_s]}(k) \leq C_{i}^{U}(k) , \quad \forall k = 0,1,2,\ldots.
\end{equation}
In relation to the $i$-th channel, it is easy to show that if there exist functions
\begin{equation}\label{eq:cfunctions}
	c^{J}_{i} : \mathbb{N} \rightarrow \mathbb{R}_{>0} : k \mapsto c^{J}_{i}(k), ~ J\in \{D,U\},
\end{equation}
bounded from above, and the water height constraint
\begin{equation}\label{eq:heightconstraints}
	c_{i}^{D}(k) \leq \delta \widetilde{x}_{i}^{[T_s]}(k) \leq c_{i}^{U}(k) , \quad \forall k = 0,1,2,\ldots
\end{equation}
is enforced, then flow rate constraint in \eqref{eq:volumeconstraints} is guaranteed, provided that $C_{i}^{J}(k) $ and $c_{i}^{J}(k) $ are suitably related to each other. 
In particular, by \eqref{eq:heightdef}, constraint \eqref{eq:heightconstraints} can be rewritten as
\begin{align}\label{eq:heightconstraints2} 
	-c_{i}^{D}(k) \leq  &T_s^{-1} [(a_{i}^{\prime}+b_{i}^{\prime}V(kT_s+T_s))^{1/2}  \nonumber \\
	&- (a_{i}^{\prime}+b_{i}^{\prime}V(kT_s))^{1/2}] \leq c_{i}^{U}(k) .
\end{align}
Multiplying each term in \eqref{eq:heightconstraints2} by $\underline{w}_{i}(k) := (b_{i}^{\prime})^{-1}  [(a_{i}^{\prime}+b_{i}^{\prime}V_{i}(kT_s+T_s))^{1/2} + (a_{i}^{\prime}+b_{i}^{\prime}V_{i}(kT_s))^{1/2}] >0$ one obtains
\begin{align}\label{eq:volumecontraints2}
	-w_{i} c_{i}^{D}(k) &\leq -\underline{w}_{i}(k) c_{i}^{D}(k)  \nonumber \\
	&\leq \delta V_{i}^{[T_s]}(k) \leq \underline{w}_{i}(k) c_{i}^{U}(k) \leq w_{i} c_{i}^{U}(k) ,
\end{align}
and, clearly, if condition $C_{i}^{J}(k) \geq w_{i} c_{i}^{J}(k)$, $J \in \{U,D\}$, is verified for all $k=0,1,2,\ldots$ with $w_{i} = 2(b_{i}^{\prime})^{-1}  (a_{i}^{\prime}+b_{i}^{\prime}\overline{V}_{i})^{1/2} \geq \underline{w}_{i}(k)$, then \eqref{eq:volumecontraints2} becomes a stricter or equivalent version of \eqref{eq:volumeconstraints}. Inequality \eqref{eq:volumecontraints2} can be trivially extended to the case in which $\theta_{i}=0$ by imposing $w_{i}=\underline{w}_{i}=L_{i}b_{i}$.

\begin{figure*}[t!]
	\centering
	\includegraphics[scale=0.26]{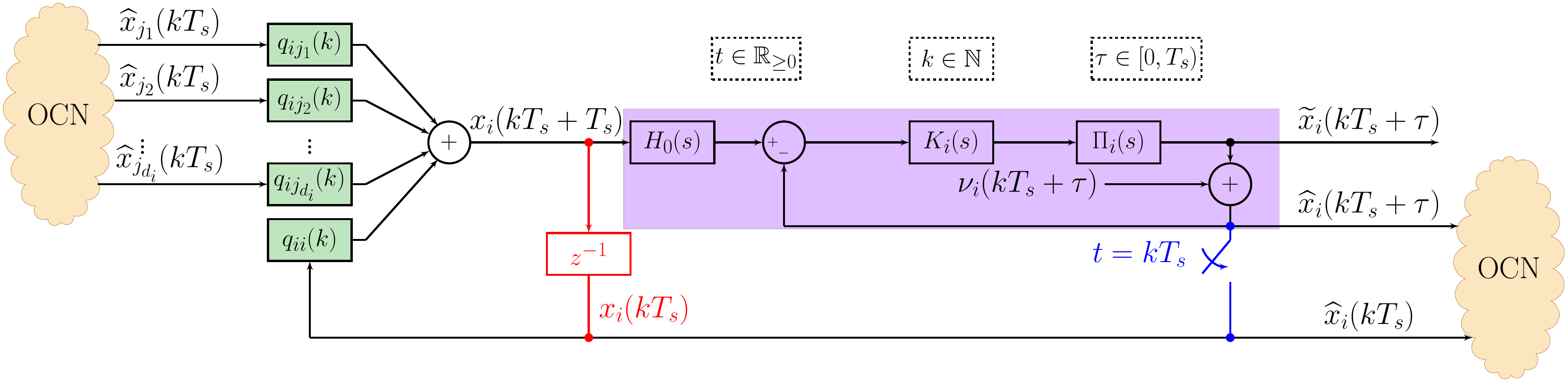}
	\caption{Local control scheme for the $i$-th channel. 
 The inner control loop is shaded in violet and the time-varying consensus coefficients $q_{ij}(k)$ though which the previous value of the generated reference is combined among $i$'s neighbors are highlighted in green (see Subsec. \ref{subsec:proposedRDP} for a detailed discussion). The rest of the OCN is schematically represented as the orange cloud. 
With the main aim of regulating the OCN, the information flowing in the red element is considered, replacing that coming from the blue element, which bridges the inner and outer feedback loops and is used in the practical implementation of the scheme.
	}
	\label{fig:ControlSchemeCavallinoDraft}
\end{figure*}

Under closed-loop control, local measurements of the system physical quantities play a fundamental role in the design of a regulator, i.e. a scheme that estimates the current state and governs it. In the considered framework, height measurements are likely more reliable and accessible than volume ones because either the latter are derived from the former and thus subject to larger errors, or more complex measuring techniques are needed to retrieve volume information.
Thus, motivated by the fact that water flow constraint in \eqref{eq:volumeconstraints} can be ensured by inequality \eqref{eq:heightconstraints} on the water height change rate through the selection of each $c_{i}^{J}(k)$, such that $c_{i}^{J}(k) \leq C_{i}^{J}(k)/w_{i}$, we propose a general regulation scheme that evenly balance the height references across the given network by acting on the local water levels of the channels. 
To this aim, we suppose to deploy $n$ agents on the water system, namely $n$ pairs of communicating actuators (e.g. weir gates or pumps)  installed at the endpoints of each channel. 
More formally, the following assumption is made.

\begin{assumption}\label{asm:network}
	The flow of the given water network $\Gmc^{o}$ is controlled by $n$ agents (actuator pairs), each one of them associated to the corresponding $i$-th channel. 
	The communication established through the network of agents is then captured by topology $\Gmc = \mathrm{L}(\Gmc^{o})$. 
\end{assumption}

In order to focus on the design of a water distribution algorithm, we further assume that the $i$-th agent is endowed with a local closed-loop control scheme capable of regulating the water increment $\widetilde{x}_{i}$ exactly. 
As schematically shown in Fig.~\ref{fig:ControlSchemeCavallinoDraft}, such a scheme is formed by each inner feedback loop (shaded in violet) determined by the $i$-th inner controller $K_{i}(s)$ and the $i$-th plant $\Pi_{i}(s)$ and an outer feedback loop (in black). 

More precisely, letting $\nu_i(t)$ be a zero mean Gaussian white noise and given the $i$-th measured water level increment $\widehat{x}_{i}(kT_s) = \widetilde{x}_i(kT_s) + \nu_i(kT_s)$ at time $t=kT_s$, the reference $x_{i}(kT_s + T_s)\in \mathbb{R}$ is imposed and is supposed to be ideally tracked by the true unknown state\footnote{Such an ideal estimation is considered just to simplify the discussion in the sequel. Nonetheless, filtering techniques should be exploited in practice to compensate for measurement noise, while properly designed local controllers allow reaching the reference value.}, i.e. it holds that $\widetilde{x}_i(k T_s + \tau) \rightarrow x_i(kT_s + T_s)$ as $\tau \rightarrow T_s$.
Considering the channel $i$ and the corresponding control depicted in Fig. \ref{fig:ControlSchemeCavallinoDraft}, the current reference $x_{i}(k T_s+T_s)$ is consequently computed as a linear combination of the past input references $x_{j}(k T_s)$, with $j \in \overline{\Nmc}_{i} = \{i,j_{1},\ldots,j_{d_{i}}\}$, via time-varying coefficients $q_{ij}(k)$.  
In addition, it is worth to mention that, for a real setting, the current reference $x_{i}(k T_s+T_s)$ has to be maintained over time for all $t \in [kT_s,kT_s+T_s)$, e.g. by means of the classic zero-order hold $H_{0}(s)$. 

Also, a final setup assumption regards the initial conditions.

\begin{assumption}\label{ass}
	The mean of the initial conditions $\alpha := n^{-1}\sum_{i=1}^{n} \widehat{x}_{i}(0) $ is equal to zero.
\end{assumption}
Indeed, whenever $\alpha \neq 0$ occurs then the ensemble state $\widehat{x}(0)$ can be detrended and reassigned so that $\widehat{x}_{i}(0) \gets (\widehat{x}_{i}(0) - \alpha)$ for all $i= 1,\ldots,n$. It is worth to note that this operation is just equivalent to recalculate each reference $h_{i}^{Z}$ as $h_{i}^{Z} \gets h_{i}^{\star} : = h_{i}^{Z}+\alpha$, where $\alpha$ can be computed by running a preliminary average consensus protocol, e.g. through \eqref{eq:LAP}-\eqref{metro}. Preprocessing the initial data via detrending as said is advantageous because each quantity $h_{i}^{\star}$ provides the desired final water level.

Denoting with $x \in \mathbb{R}^n$ the ensemble reference vector, the even compensation of the water levels in the underlying OCN can be thus formulated as the (decentralized) minimization of the following objective function:
\begin{equation}\label{eq:costfunction}
	J(x) :=  \frac{1}{2} \sum\limits_{i=1}^{n} \sum\limits_{j\in \Nmc_{i}} p_{ij}(x_{i}-x_{j})^{2} = \frac{1}{2} x^{\top} (I_{n}-P) x,
\end{equation}
where it is imposed for the coefficients $p_{ij}$ of the matrix $P$ to have MH characterization, as in \eqref{metro}.
We thus finally formalize the following \textit{guidance} problem.
\begin{problem}\label{problem}
	Suppose to govern the flow dynamics of the underlying OCN via the control scheme depicted in Fig. \ref{fig:ControlSchemeCavallinoDraft} and let assumptions Asm. \ref{asm:network} - Asm. \ref{ass} be satisfied. 
	Design an iterative and fully distributed discrete-time procedure that determines an update rule for the $i$-th increment reference $x_{i}(k T_s + T_s) $, which is tracked by (an estimate of) the $i$-th state $\widetilde{x}_{i}(kT_s)$ over $k \in \mathbb{N}$, with $i = 1, \ldots , n$. In particular, ensure that $x(kTs)$  minimizes \eqref{eq:costfunction} while constraint in \eqref{eq:heightconstraints} is guaranteed to hold for all $k \in \mathbb{N}$.
\end{problem}
Practically speaking, solving Problem \ref{problem} corresponds to the design of a dynamic reference for the local controllers, which guarantees to minimize the imbalance of water levels throughout the OCN - possibly leading to their equalization - starting from an arbitrarily uneven initial level distribution (due, e.g., to localized natural events or human actions or system failure).
Note that Problem \ref{problem} does not deal with the design of a specific control strategy, since the goal of this paper is to construct a reference signal for the regulation of an OCN. In fact, \textit{any} valid control law allowing for Asm. \ref{asm:network} - Asm. \ref{ass} that seeks reference tracking can be employed.  Rather, Problem \ref{problem} focuses on yielding an operative sequence that (i) serves as a water increment reference to be tracked; (ii) optimally compensates the water levels in the underlying OCN, namely it directly minimizes objective \eqref{eq:costfunction}; (iii) does not violate the water height constraint \eqref{eq:heightconstraints}.

\subsection{Proposed consensus-based reference generation protocol for OCN regulation} \label{subsec:proposedRDP}
In the sequel, we present the dynamics of the considered iterative water level regulation scheme introduced in Sec.~\ref{ssec:setup} in order to find a solution to Problem \ref{problem}. 
Drawing inspiration from~\cite{fabris2019distributed}, we include memory in the classic consensus dynamics~\eqref{eq:LAP} to provide a weighted correction to current water level, thus leading to the so-called \textit{reference generation protocol}~(RGP)
\begin{align} 
	x(k+1)&=\eta(k) x(k) + (1-\eta(k))Px(k)  \nonumber\\ 
	&=(\eta(k) I_{n} + (1-\eta(k))P)x(k) :=Q_{\eta}(k)x(k),
	\label{eq:3}
\end{align}
where $x \in \mathbb{R}^n$ represents the ensemble water increment reference, $T_s = 1$ is set w.l.o.g. and $\eta(k) \in (0, 1)$ can be considered a parameter trading-off self-measurements against neighbors' measurements. Imposing the MH structure on matrix $P$ in \eqref{eq:3}, as specified in \eqref{metro}, a crucial observation immediately follows.
\begin{remark}
	Leveraging the structure of objective $J(x)$ in \eqref{eq:costfunction}, dynamics in \eqref{eq:3} can be easily rewritten as 
	\begin{equation}\label{eq:updaterulesd}
		x(k+1) = x(k) - (1-\eta(k)) ~ \nabla J(x(k)).
	\end{equation}
	The expression \eqref{eq:updaterulesd} can be considered a steepest descent update rule \cite{Argyros2022} with adaptive step-size $(1-\eta(k))$ and descent direction $-\nabla J(x(k)) = (P-I_{n})x(k)$. Therefore, the proposed RGP \eqref{eq:3} can be seen as a distributed direct method to minimize the objective \eqref{eq:costfunction}. Notice that, in principle, several approaches can be adopted as an alternative of \eqref{eq:updaterulesd} to render the objective $J(x)$ minimum \cite{NedicLiuDistrOpt4Control2018}, as far as the local controller employed is able to track the generated reference.
\end{remark}

It is worth to note that matrix $Q_{\eta}(k)  = (\eta(k) I_{n} + (1-\eta(k))P) \in \mathrm{stoch}^{2}(\mathbb{R}^{n \times n})$ is still doubly-stochastic, as its entry is given by $[Q_{\eta}(k)]_{ij} := q_{ij}(k) = (1-\eta(k))p_{ij}$, if $i\neq j$; $[Q_{\eta}(k)]_{ii} := q_{ii}(k) = \eta(k)+(1-\eta(k))p_{ii}$, otherwise. 
Its eigenvalues at time $k$ belong to the interval $(-1 + 2\eta(k), 1]$; indeed, exploiting the linearity of the spectrum, it holds that $\lambda_{i}^{Q_{\eta}}(k) = \eta(k) + (1 - \eta(k))\lambda_{i}^P $, for all $ i = 0, \ldots, n - 1$. Also, parameter $\eta(k)$ allows the presence of positive self-loops.

Intuitively, the $\eta(k)$ parameter can be tuned to control the dynamics in~\eqref{eq:3}, since its convergence is a function of the spectrum of $Q_{\eta}(k)$, which is strongly dependent on $\eta(k)$.
A good and viable strategy of selecting parameter $\eta(k)$ when it is constant, namely if $\eta(k)=\eta_{0}$ and $Q_{\eta}(k) = Q_{\eta_{0}}$ for all $k\in \mathbb{N}$, is given by the minimization of the second largest (in modulus) eigenvalue of $Q_{\eta_{0}}$:
\begin{equation}\label{eq:optimaletamethod}
	\eta^{\star}=\underset{\eta_{0} }{\arg \min} \left\lbrace  \underset{i = 1, \ldots,n-1}{\max} \{| \lambda_{i}^{Q_{\eta_{0}}}| \}\right\rbrace.
\end{equation}
As shown in~\cite{FabrisMichielettoCenedese2022}, after assigning
\begin{equation}\label{eq:varsigmaPdef}
	\varsigma_{P} := (\lambda_{1}^{P}+\lambda_{n-1}^{P})/2,
\end{equation}
the optimal value for $\eta_{0}$ is indeed yielded by
\begin{equation}\label{eq:3.3}
	\eta^\star= \varsigma_{P}/ (\varsigma_{P}-1).  
\end{equation}
However, it is well-known that spectral analysis applied to time-varying dynamical systems cannot be exploited to study their convergence. As method in \eqref{eq:optimaletamethod} cannot be used in this setting, in the subsequent discussions we provide a suitable approach to design the value of $\eta(k)$ at each time $k$ -- also accounting for water flow constraints -- and an appropriate convergence analysis concerning \eqref{eq:3} is treated in App.~\ref{app:convergenceanalysis}.

\begin{remark}
		It is well-known that undesired perturbations affecting the coupling consensus weights w.r.t. a nominal value may lead to instability for the whole interconnected system \cite{FabrisZelazo2022,FabrisZelazo2023}. Consequently,
		besides the need to cope with constraints depending on the network state, an effective design of $\eta(k)$ at each instant $k$ is crucial to guarantee fast and robust convergence properties for the protocol in \eqref{eq:3}. 
\end{remark}

\subsection{Handling of the network constraints}

The formulation of system \eqref{eq:3} defined through the update matrix $Q_{\eta}(k)$ 
does not take into account limitations to the information exchange between two connected nodes (i.e., channels).
To this purpose, a proper tuning for parameter $\eta(k)$ is proven to ensure that constraint~\eqref{eq:heightconstraints} is satisfied for all $k=0,1,2,\ldots$, so that water flows can be desirably handled.
Again, w.l.o.g. we let $T_s=1$, requiring the following local constraints to hold at each iteration $k$ in relation to the water level variation $\delta x_{i}(k) := \delta x_{i}^{[1]}(k)$:
\begin{enumerate}
	\item[(i)] if $ \delta x_{i}(k) < 0$, we say that node $i$ is in \textit{download regime}, so the $i$-th \textit{download constraint} holds
	\begin{equation}
		\delta x_{i}(k)\geq -c_{i}^{D}(k) ;
		\label{eq:7}
	\end{equation}
	\item[(ii)] if $ \delta x_{i}(k)>0$, we say that node $i$ is in \textit{upload regime}, so the $i$-th \textit{upload constraint} holds
	\begin{equation}
		\delta x_{i}(k)\leq c_{i}^{U}(k);
		\label{eq:8}
	\end{equation}
	\item[(iii)] otherwise, node $i$ is considered to be \textit{at the equilibrium}, and we simply allow
	\begin{equation}
		\delta x_{i}(k)=0.
		\label{eq:equi}
	\end{equation}
\end{enumerate}
Clearly, the download constraint in~\eqref{eq:7} regulates the outgoing flow of a node towards its neighbors. On the other hand, the upload constraint in~\eqref{eq:8} accounts for the opposite effect, namely the capacity for a node to receive an incoming water flow from its neighbors. Also, for the sake of completeness, \eqref{eq:equi} specifies the case relative to the equilibrium regime\footnote{Notice that if \eqref{eq:equi} holds for all $i=1,\ldots,n$ the network consensus is achieved as desired, since the agreement vector $\alpha^{\prime} \ones_{n}$, $\alpha^{\prime} \in \mathbb{R}$, represents the sole equilibrium for system \eqref{eq:3}.} for node $i$, that is $x_{i}(k+1)=x_{i}(k)$.

Now, a proper tuning for the parameter $\eta(k)$ is derived to ensure \eqref{eq:7}-\eqref{eq:equi} during the execution of protocol \eqref{eq:3}.
To begin, we examine the download regime: under this condition, we provide a value for $\eta(k)$ that is denoted with $\eta_{D}(k)$.\\
By considering the $i$-th equation of RGP~\eqref{eq:3} and substituting $x_{i}(k+1)$ into the download constraint \eqref{eq:7} one has:
\begin{equation}
	\eta_{D}(k)\left(\!\!x_{i}(k)-\sum_{j=1}^n p_{ij}x_{j}(k)\!\!\right) \geq x_{i}(k)-\sum_{j=1}^n p_{ij}x_{j}(k)-c_{i}^{D}(k).
	\label{eq:11} 
\end{equation}
In order to state an explicit relation for $\eta_{D}(k)$, we first observe that $(x_{i}(k)-\sum_{j=1}^n p_{ij}x_{j}(k))$ is strictly positive $\forall i = 1,\ldots,n$; this is proven by substituting $x_{i}(k+1)$, as expressed in RGP~\eqref{eq:3}, into the download regime characterization~$\delta x_{i}(k)<0$, and by exploiting the fact that $\eta_{D}(k) \in (0,1)$ will be guaranteed.
Hence, one obtains
\begin{equation}
	\eta_{D}(k) \geq 1-\frac{c_{i}^{D}(k)}{x_{i}(k)-\sum_{j=1}^n p_{ij}x_{j}(k)}.
	\label{eq:12}
\end{equation}
Clearly, inequality \eqref{eq:12} exhibits a direct dependence on the $i$-th state; thus, it cannot be properly used to derive parameter $\eta_{D}(k)$, as this is a global quantity. Nonetheless, enforcing
\begin{equation}
	\eta_{D}(k) \geq 1- \frac{c_{D}(k)}{\left\|x(k)-Px(k)\right\|_{\infty}},
	\label{eq:3.16}
\end{equation}
where 
\begin{equation}\label{eq:cD}
	c_{D}(k) := \min_{i=1,\ldots,n} \left\lbrace c_{i}^{D}(k) \right\rbrace,
\end{equation}
inequality \eqref{eq:12} is satisfied for all $i=1,\ldots,n$. Notice however that with these positions,
\eqref{eq:3.16} expresses a tight (centralized) upper bound for \eqref{eq:12}.\\
As a matter of fact, aiming at distributing the computation of $\eta_{D}(k)$, we exploit the fact that
\begin{equation}\label{eq:omegainequalitydef}
	\left\|x(k)-Px(k)\right\|_{\infty} \leq \left\|I_{n}-P\right\|_{\infty}\left\|x(k)\right\|_{\infty}  \leq \omega  \left\|x(k)\right\|_{\infty}
\end{equation}
holds by the submultiplicative property of the infinite norm and Lem.~\ref{omeg} in App. \ref{app:prelimlemmas} (with $\omega$ defined as in \eqref{eq:omegadef}). It is thus possible to impose
\begin{equation}\label{eq:3.27}
	\eta_{D}(k) \geq 1- \frac{c_{D}(k)}{\omega\left\|x(k)\right\|_{\infty}}, \quad \text{if } \left\|x(k)\right\|_{\infty}>0. 
\end{equation}
Differently from \eqref{eq:3.16}, inequality \eqref{eq:3.27} is more conservative but it allows to provide a fully distributed design method for $\eta(k)$, e.g. by relying only on the so-called \textit{max-consensus} protocol to retrieve the value of $\left\|x(k)\right\|_{\infty}$.\\
Remarkably, the r.h.s. of \eqref{eq:3.27} can be used to set the value of $\eta_{D}(k)$.
However, in general, it is not guaranteed for quantity $(1-c_{D}(k)/(\omega\left\|x(k)\right\|_{\infty}))$ to be strictly positive. For this reason, we introduce a parameter
\begin{equation}\label{etal}
	\eta_L=\begin{cases} \eta^{\star} \quad & \text{if } \eta^{\star}>0; \\
		\zeta \quad &\text{otherwise};
	\end{cases}
\end{equation}
where $\eta^{\star}$ is chosen as in \eqref{eq:3.3} and $\zeta \in (0,1)$ is an arbitrarily small given constant\footnote{To the authors' experience, it seems that, actually, all $P \in \mathrm{stoch}^{2}(\mathbb{R}^{n \times n})$ defined via the MH in method \eqref{metro} yield $\varsigma_{P} \geq 0$ (see \eqref{eq:varsigmaPdef}). This leads to the necessity of imposing $\eta_{L} = \zeta$, within this framework, discarding systematically the more desirable choice $\eta_{L} = \eta^{\star}$.}. 
Quantity $\eta_{L}$, defined as in \eqref{etal}, indeed prevents to obtain $\eta_{D}(k) \le 0$ by setting $\eta_{D}(k) = \eta_{L}$ if $1- c_{D}(k) / (\omega\left\|x(k)\right\|_{\infty}) \leq 0$.
On the one hand, $\eta_{L}$ represents a suboptimal choice for the value of $\eta_{D}(k)$ whenever $\eta_{L} \geq 1- c_{D}(k) / (\omega\left\|x(k)\right\|_{\infty})$, since it ensures fast convergence for the corresponding static\footnote{Namely adopting a constant $\eta(k)=\eta_{L}$ in RGP \eqref{eq:3}.} consensus protocol. 
On the other hand, selecting $\eta_{D}(k)=\eta_{L}$ whenever $\eta_{L} < 1- c_{D}(k) / (\omega\left\|x(k)\right\|_{\infty})$ may result in the violation of download constraint \eqref{eq:7}. Therefore, we finally set
\begin{equation}\label{eq:lastineqetaD}
	\eta_{D}(k) =
	\begin{cases}
		\max \left(\eta_L, 1- \frac{c_{D}(k)}{\omega\left\|x(k)\right\|_{\infty}}\right), \quad & \text{if } \left\|x(k)\right\|_{\infty}>0; \\
		\eta_{L}, \quad &\text{otherwise}.
	\end{cases}
\end{equation}

Whereas, for the upload regime, we provide a value for $\eta(k)$ that is denoted with $\eta_{U}(k)$ through a similar reasoning. 
Setting

\begin{equation}\label{eq:cU}
	c_{U}(k) := \min_{i=1,\ldots,n} \left\lbrace c_{i}^{U}(k) \right\rbrace,
\end{equation}
we retrieve
\begin{equation}\label{eq:lastineqetaU}
	\eta_{U}(k) =
	\begin{cases}
		\max \left(\eta_L, 1- \frac{c_{U}(k)}{\omega\left\|x(k)\right\|_{\infty}}\right), \quad &\text{if }\left\|x(k)\right\|_{\infty}>0;\\
		\eta_{L}, \quad &\text{otherwise};
	\end{cases}
\end{equation}
where $\omega$ and $\eta_{L}$ respectively defined as in \eqref{eq:omegadef} and \eqref{etal}.

Combining the results obtained for the download and upload regimes in \eqref{eq:lastineqetaD} and \eqref{eq:lastineqetaU}, the choice yielded by $\eta(k) := \max(\eta_{D}(k),\eta_{U}(k))$ now appears straightforward in order to guarantee constraints \eqref{eq:7}-\eqref{eq:8} to hold for all $k=0,1,2,\ldots$. 
As a consequence, assigning
\begin{equation}\label{eq:c_min}
	c(k) := \min(c_{D}(k),c_{U}(k)),
\end{equation}
with $c_{D}$ and $c_{U}$ respectively defined as in \eqref{eq:cD} and \eqref{eq:cU}, we finally set 
\begin{equation}\label{eq:3.33}
	\eta(k) = 
	\begin{cases}
		\max \left(\eta_L, 1- \frac{c(k)}{\omega \left\|x(k)\right\|_{\infty}}\right), \quad &\text{if } \left\|x(k)\right\|_{\infty}>0 \\
		\eta_{L}, \quad &\text{otherwise}.
	\end{cases}
\end{equation} 
The next proposition demonstrates that the expression of $\eta(k)$ just provided is admissible, namely it is ensured that $\eta(k) \in (0,1)$ and $x_{i}(k) \in [-h_{i}^{Z},h_{i}^{S}-h_{i}^{Z}]$, $\forall k = 0,1,2,\ldots$, as required.

\begin{proposition}\label{prop:etabounds}
	Let us assign 
	\begin{equation}\label{eq:etaHdef}
		\eta_H:=
		\begin{cases}
			\max \left( \eta_L, 1-\frac{\underset{k=0,1,2,\ldots}{\min}\{c(k)\}}{\omega\left\|x(0)\right\|_{\infty}}\right) , \quad &\text{if } \left\|x(0)\right\|_{\infty}>0 \\
			\eta_{L}, \quad &\text{otherwise}.
		\end{cases}
	\end{equation}
	with $\omega$, $\eta_{L}$ and $c(k)$ respectively defined as in \eqref{eq:omegadef}, \eqref{etal} and \eqref{eq:c_min}. Then, for~$\eta(k)$ defined as in~\eqref{eq:3.33}, one has
	\begin{equation}
		\label{etadom}
		\eta(k) \in [\eta_L,\eta_H]\subseteq (0,1), \quad \forall{k}=1,2,3 \ldots.
	\end{equation} 
	Moreover, if $x_{i}(0) \in [-h_{i}^{Z},h_{i}^{S}-h_{i}^{Z}]$, then for all $k = 1,2,\ldots$ it holds that $x_{i}(k) \in [-h_{i}^{Z},h_{i}^{S}-h_{i}^{Z}]$, with $ i = 1,\ldots,n$.
\end{proposition}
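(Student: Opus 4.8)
The plan is to prove the two assertions in sequence, with the doubly-stochasticity of $Q_{\eta}(k)$ acting as the common engine for both. I would first settle the inclusion $[\eta_{L},\eta_{H}]\subseteq(0,1)$ and the easy lower bound $\eta(k)\geq\eta_{L}$, then spend the real effort on the upper bound $\eta(k)\leq\eta_{H}$, and finally deduce the state invariance.

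For the left endpoint I would inspect \eqref{etal}: the branch $\eta_{L}=\zeta$ lies in $(0,1)$ by construction, while for $\eta_{L}=\eta^{\star}$ I would combine \eqref{eq:3.3} and \eqref{eq:varsigmaPdef} with the spectral bounds $\lambda_{1}^{P}<1$ and $\lambda_{n-1}^{P}>-1$ to get $\varsigma_{P}\in(-1,1)$; since $\eta^{\star}=\varsigma_{P}/(\varsigma_{P}-1)>0$ forces $\varsigma_{P}<0$, a one-line computation gives $\eta^{\star}=|\varsigma_{P}|/(1+|\varsigma_{P}|)\in(0,1)$, so $\eta_{L}\in(0,1)$. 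For the right endpoint I would read off \eqref{eq:etaHdef}: if $\left\|x(0)\right\|_{\infty}=0$ then $\eta_{H}=\eta_{L}<1$, and otherwise $\eta_{H}=\max(\eta_{L},1-\min_{k}c(k)/(\omega\left\|x(0)\right\|_{\infty}))<1$ because $\omega>0$, $\left\|x(0)\right\|_{\infty}<\infty$, and $\min_{k}c(k)>0$ (each $c_{i}^{J}$ is $\mathbb{R}_{>0}$-valued). Since $\eta_{H}\geq\eta_{L}$, this gives $[\eta_{L},\eta_{H}]\subseteq(0,1)$. The lower bound $\eta(k)\geq\eta_{L}$ is then immediate from \eqref{eq:3.33}; note moreover that each branch of \eqref{eq:3.33} is strictly below $1$ (in the first branch because $\eta_{L}<1$ and $1-c(k)/(\omega\left\|x(k)\right\|_{\infty})<1$ as $c(k)>0$, in the second because $\eta(k)=\eta_{L}<1$), so $\eta(k)\in(0,1)$ for every $k$ follows directly from its own definition, without any information on the state.

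The crux is $\eta(k)\leq\eta_{H}$, and this is where doubly-stochasticity enters. Having $\eta(k)\in(0,1)$ for all $k$, the matrix $Q_{\eta}(k)=\eta(k)I_{n}+(1-\eta(k))P$ has nonnegative entries and unit row sums, hence $\left\|Q_{\eta}(k)\right\|_{\infty}=1$; applying this to $x(k+1)=Q_{\eta}(k)x(k)$ and iterating yields the key monotonicity $\left\|x(k)\right\|_{\infty}\leq\left\|x(0)\right\|_{\infty}$ for all $k$. With this, I compare termwise: if $\left\|x(k)\right\|_{\infty}=0$ then $\eta(k)=\eta_{L}\leq\eta_{H}$; otherwise $\left\|x(0)\right\|_{\infty}\geq\left\|x(k)\right\|_{\infty}>0$ selects the first branch of $\eta_{H}$, and from $c(k)\geq\min_{l}c(l)>0$ together with $\left\|x(k)\right\|_{\infty}\leq\left\|x(0)\right\|_{\infty}$ I obtain $c(k)/(\omega\left\|x(k)\right\|_{\infty})\geq\min_{l}c(l)/(\omega\left\|x(0)\right\|_{\infty})$, hence $1-c(k)/(\omega\left\|x(k)\right\|_{\infty})\leq\eta_{H}$. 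Taking the maximum with $\eta_{L}\leq\eta_{H}$ closes $\eta(k)\leq\eta_{H}$, giving \eqref{etadom} (in fact for all $k\geq0$, a fortiori for $k\geq1$). For the invariance of the admissible box I reuse $Q_{\eta}(k)$, now through its order-preserving action: since $x_{i}(k+1)=\sum_{j\in\overline{\Nmc}_{i}}q_{ij}(k)x_{j}(k)$ is a convex combination, one has $\min_{j}x_{j}(k)\leq x_{i}(k+1)\leq\max_{j}x_{j}(k)$, so $\min_{j}x_{j}(k)$ is nondecreasing and $\max_{j}x_{j}(k)$ nonincreasing; by induction every $x_{i}(k)$ stays in the range $[\min_{j}x_{j}(0),\max_{j}x_{j}(0)]$ spanned by the initial levels, which, together with initial feasibility and the fact that after the detrending of Asm.~\ref{ass} the consensus value $0$ lies in each interval, gives $x_{i}(k)\in[-h_{i}^{Z},h_{i}^{S}-h_{i}^{Z}]$.

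I expect the \emph{main obstacle} to be the last confinement step rather than the algebra for $\eta(k)$. The convexity argument only certifies that each $x_{i}(k)$ remains inside the common hull $[\min_{j}x_{j}(0),\max_{j}x_{j}(0)]$ of the initial data, so to reach the per-channel conclusion one must argue that this hull lies within every channel's box $[-h_{i}^{Z},h_{i}^{S}-h_{i}^{Z}]$; the relevant leverage is that $0$ is interior to all of them and that the dynamics is nonexpansive toward that common consensus value, so no channel can overshoot its own admissible range. A secondary point worth flagging for correctness is the logical ordering of the second paragraph: one must first certify $\eta(k)\in(0,1)$ (which, fortunately, is available directly from \eqref{eq:3.33}) so that $Q_{\eta}(k)$ is genuinely nonnegative with unit rows, and only then invoke the contraction $\left\|x(k)\right\|_{\infty}\leq\left\|x(0)\right\|_{\infty}$ to bound $\eta(k)$ from above, thereby avoiding any circular dependence between the bound on $\eta$ and the bound on the state.
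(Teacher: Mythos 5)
Your treatment of the bounds on $\eta(k)$ is correct and follows essentially the same route as the paper: the engine in both is that $Q_{\eta}(k)$ acts by convex combinations, so $\left\|x(k)\right\|_{\infty}\leq\left\|x(0)\right\|_{\infty}$ (the paper's \eqref{eq:upper}), which together with $c(k)\geq\min_{l}c(l)$ gives $\eta(k)\leq\eta_{H}$. Your explicit verification that $\eta_{L}\in(0,1)$ and your remark on the logical ordering --- first obtain $\eta(k)\in(0,1)$ from \eqref{eq:3.33} alone so that $Q_{\eta}(k)$ is genuinely row-stochastic, and only then invoke the contraction to get the upper bound --- are in fact more careful than the published proof, which asserts these points without comment. (One loose end you share with the paper: for $\eta_{H}<1$ one needs $\inf_{k}c(k)>0$, which does not follow from pointwise positivity of the $c_{i}^{J}$ over an infinite index set.)

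The genuine gap is the one you flag yourself, and the leverage you propose does not close it. The convex-combination argument confines every $x_{i}(k)$ to the common hull $[\min_{j}x_{j}(0),\max_{j}x_{j}(0)]$, and \eqref{eq:upper} confines it to $[-\left\|x(0)\right\|_{\infty},\left\|x(0)\right\|_{\infty}]$; neither set is contained in the channel-specific box $[-h_{i}^{Z},h_{i}^{S}-h_{i}^{Z}]$ in general. \emph{Nonexpansiveness toward the consensus value $0$} is a property of the hull, not of individual coordinates: a single coordinate can move monotonically away from $0$ toward the extreme values contributed by other channels while the hull shrinks. Concretely, if channel $1$ has a shallow box $[-1,1]$ with $x_{1}(0)=0.5$ and a neighbor has $x_{j}(0)=5$ inside its own deeper box, then $x_{1}(k)$ can rise above $1$ while remaining inside the hull $[0.5,5]$, so the per-channel claim fails even though all stated hypotheses hold. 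The last sentence of the proposition is true only under an additional compatibility condition, e.g.\ $[\min_{j}x_{j}(0),\max_{j}x_{j}(0)]\subseteq[-h_{i}^{Z},h_{i}^{S}-h_{i}^{Z}]$ for every $i$, or uniform boxes across channels; you should state such a condition rather than appeal to nonexpansiveness. For what it is worth, the paper's own proof of this final claim has exactly the same defect --- it deduces per-channel confinement from \eqref{eq:upper} alone --- so your instinct about where the weak point lies is right; the repair must come from strengthening the hypothesis, not from the dynamics.
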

\begin{proof}
	From the definition of $\eta(k)$ given in \eqref{eq:3.33} we can trivially deduce that $\eta(k) \geq \eta_L>0$.
	Regarding the upper limit of \eqref{etadom}, we first observe that
	\begin{equation}
		\max_{k=0,1,2,\ldots} \{\left\|x(k)\right\|_{\infty}\}=\left\|x(0)\right\|_{\infty},
		\label{eq:upper}
	\end{equation}
	since the update rule in \eqref{eq:3} is structured as a convex combination of the previous state entries.
	Then, expression \eqref{eq:etaHdef}, the positivity of functions $c_{i}^{J}$, $J \in \{D,U\}$, and equation~\eqref{eq:upper} imply that
	$\eta(k) \leq \eta_{H} <1 $, $\forall{k}=1,2,3 \ldots$.
	Lastly, thanks to \eqref{eq:upper}, $x_{i}(k) \in [-h_{i}^{Z},h_{i}^{S}-h_{i}^{Z}]$ is satisfied $\forall k = 1,2,\ldots$ whenever $x_{i}(0) \in [-h_{i}^{Z},h_{i}^{S}-h_{i}^{Z}]$.
\end{proof}

Notably, provided that each node in the network is aware of constants $\eta_{L}$, $\omega$ and the prescribed constraint $c(k)$, for all $k=0,1,2,\ldots$, expression \eqref{eq:3.33} can be computed in a fully distributed fashion. In particular, the value of $\left\|x(k)\right\|_{\infty}$ can be determined through the max-consensus protocol (MCP) ~\cite{Cortes2008,GianniniDiPaolaPetittiRizzo2013,AbdelrahimHendrickxHeemels2017}. Moreover, expression \eqref{eq:3.33} guarantees by design that constraints \eqref{eq:7}-\eqref{eq:8} are not violated at any instant $k\geq 0$.

\section{Distributed implementation of the proposed reference generation protocol}
\label{chp:models}
In this section, we present the main contribution of this work, namely a new distributed strategy that leverages time-varying consensus to evenly compensate water levels, along with its convergence analysis.

Generally speaking, the distribution imbalance of a quantity among states can be measured and described in many ways, such as $J(x)$ in \eqref{eq:costfunction}. However, to discuss convergence properties, here we define the distribution imbalance through the \textit{max-min disagreement function} $W: \mathbb{R}^n \rightarrow \mathbb{R}_{+} : x(k) \mapsto W(x(k))$ as
\begin{equation}\label{eq:lyap}
	W(x(k))= \underset{i=1,\ldots,n}{\max} \{x_i(k)\}-\underset{i=1,\ldots,n}{\min} \{x_i(k)\},
\end{equation}
such that $W(x(k))=0$ if and only if $x(k) = \alpha^{\prime} \ones_{n}$, with $\alpha^{\prime} \in \mathbb{R}$. So, to establish whether the consensus is reached, we select an arbitrarily small threshold $\gamma > 0$ for which condition $W(x(k)) \leq \gamma$ allows us to state that $x(k) \simeq \alpha^{\prime} \ones_{n}$. Also, observe that $\underline{\xi} W(x)^{2} \leq J(x) \leq (n/2) W(x)^{2}$ holds for all $x\in \mathbb{R}^{n}$, with $\underline{\xi}$ defined as in \eqref{eq:minentryP}. Hence, any convergence performance expressed in function of the objective $J(x)$ in \eqref{eq:costfunction} can be bounded through the properties of $W(x)$ in \eqref{eq:lyap}. 

Under this premise, our aim is to reach an agreement relatively to the network states in order to ensure even water distribution by minimizing $W(x(k))$ in \eqref{eq:lyap}. 
This minimization is attained through Alg.~\ref{alg:1}, hereafter discussed, which indeed terminates at $\bar{k}$ as soon as
\begin{equation}\label{eq:terminationcond1}
	W(x(\bar{k})) \leq \gamma, \quad  \text{ for the smallest } \bar{k} \geq 0, 
\end{equation}
assuring, as formally demonstrated in App. \ref{app:convergenceanalysis},
\begin{equation}\label{eq:terminationcond2}
	x_{i}(\bar{k}) \simeq \alpha, \quad \forall i=1,\ldots,n;
\end{equation}
with $\alpha$ defined in Asm. \ref{ass}. 

\begin{figure*}[t!]
	\centering
	\includegraphics[scale=0.26]{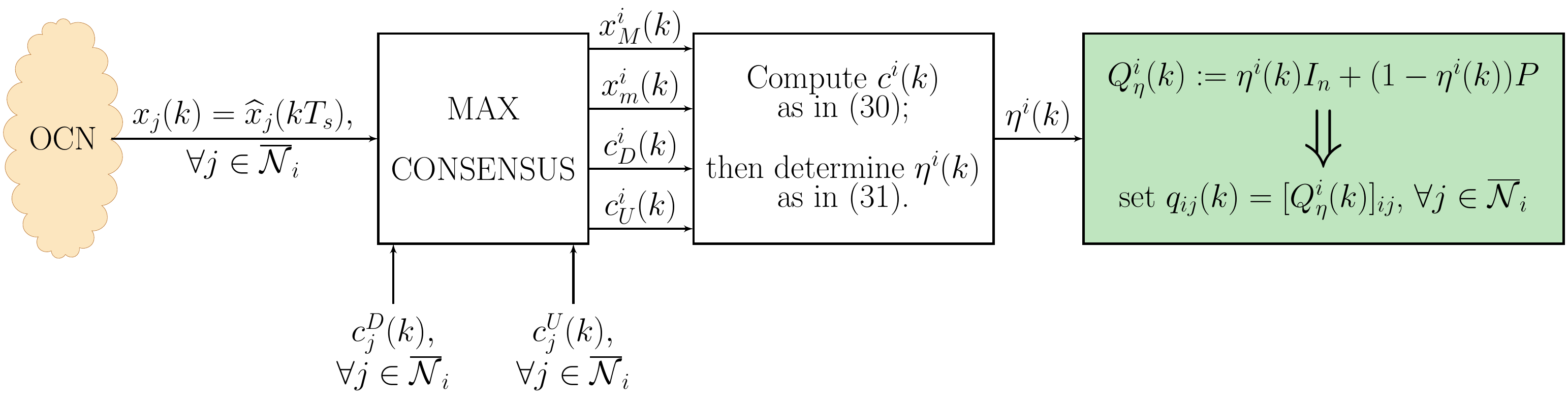}
	\caption{Workflow diagram of the proposed RGP based on Alg.~\ref{alg:1}, seen from the local perspective of the $i$-th channel. With reference to the whole view given in Fig.~\ref{fig:ControlSchemeCavallinoDraft}, this element represents the rightmost part, where the green blocks in the two figures correspond. 
	}
	\label{fig:RGPblockdiagram}
\end{figure*}

\begin{figure*}[t!] 
	\centering
	\vspace{-12pt}
	\subfloat[Given graph $\Gmc^{o}$]{\includegraphics[scale=0.28, clip ]{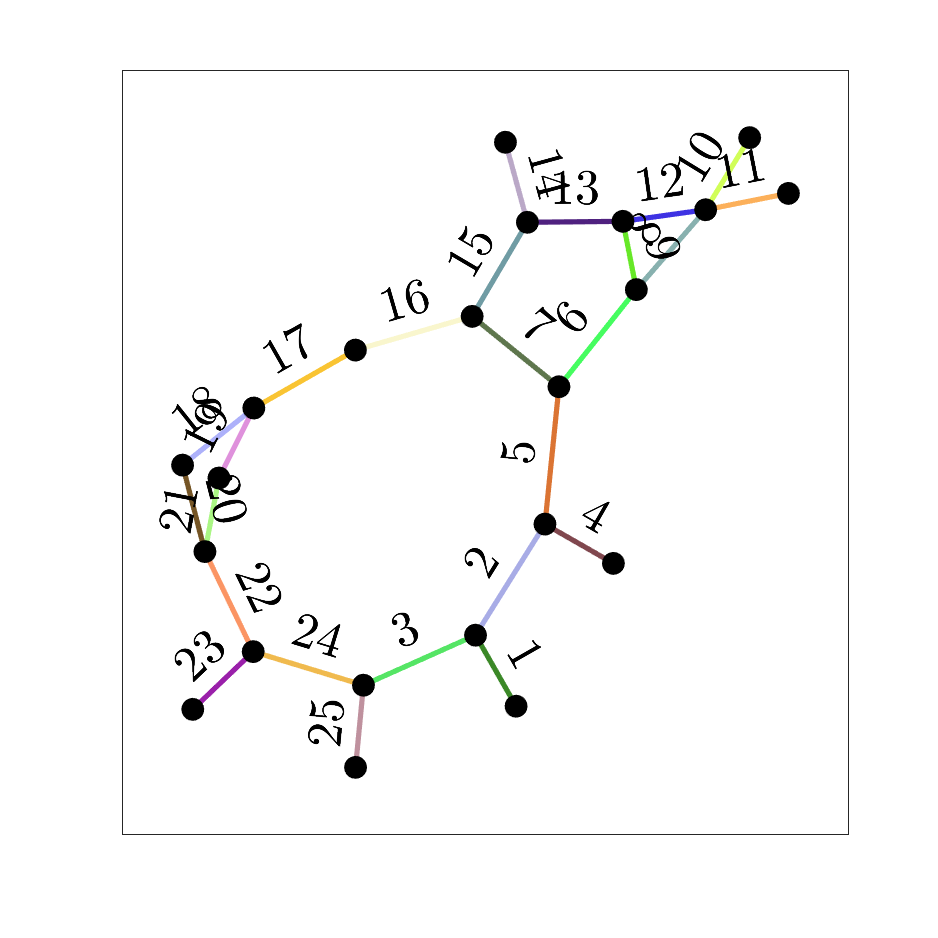}\label{fig:Go1}} 
	\subfloat[Adjoint graph $\Gmc = \mathrm{L}(\Gmc^{o})$]{\includegraphics[scale=0.28, clip]{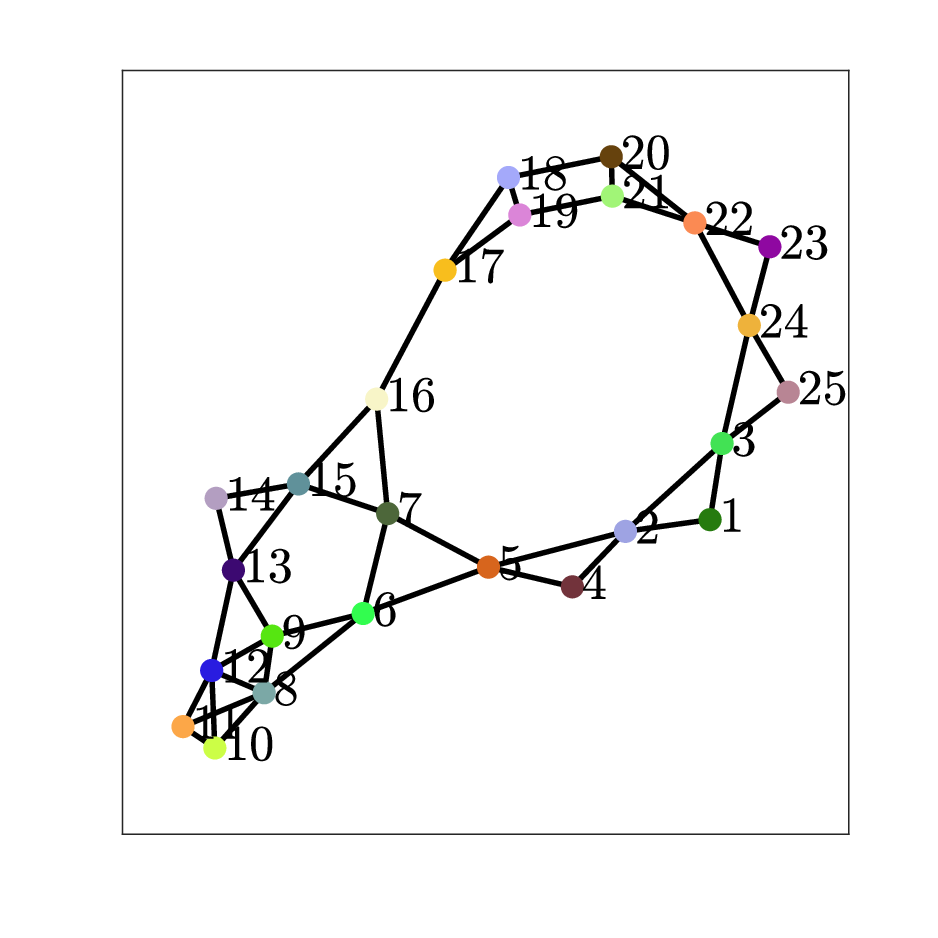}\label{fig:G1}}
	\subfloat[Given graph $\Kmc^{o}$]{\includegraphics[scale=0.28, clip ]{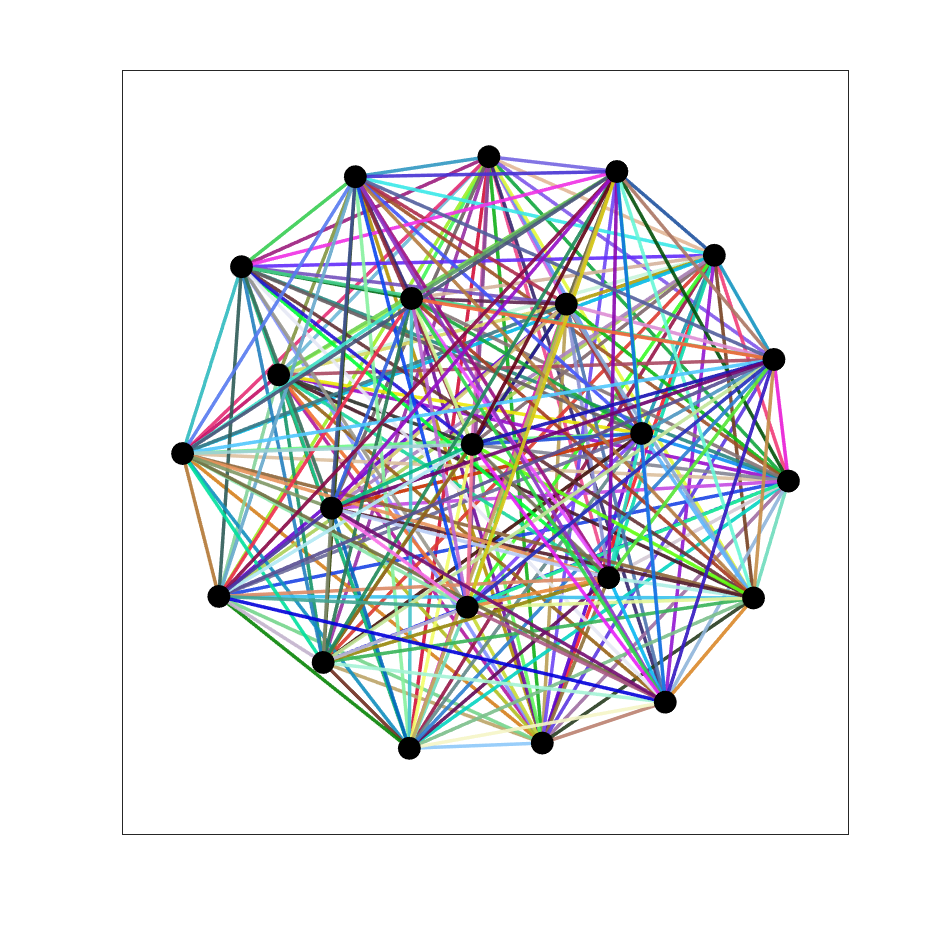}\label{fig:Go2}}
	\subfloat[Adjoint graph $\Tmc = \mathrm{L}(\Kmc^{o})$]{\includegraphics[scale=0.28, clip ]{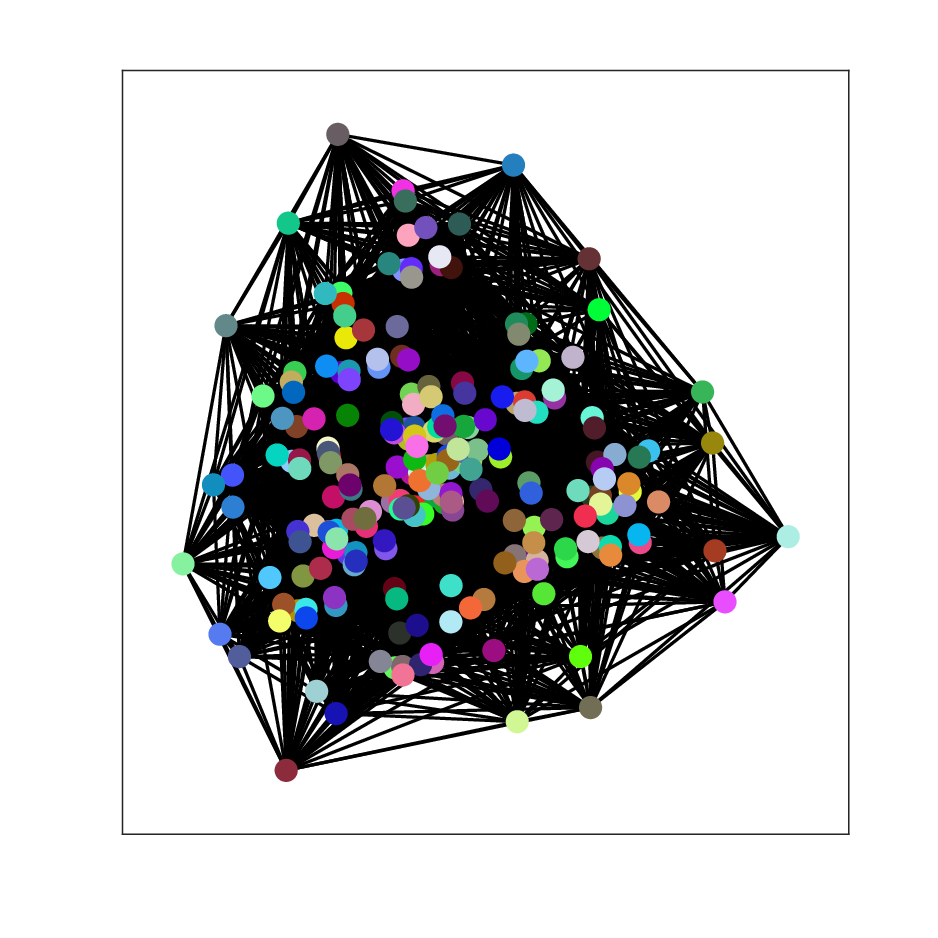}\label{fig:G2}}
	\caption{(a)-(b): graphs capturing the topological information of the Cavallino network of Fig. \ref{fig:cavallinowaternetwork}. (c)-(d): complete-graph version of the Cavallino network for comparison purposes.}
	\label{fig:CavallinoTopology}
\end{figure*}


\begin{algorithm}[t]
	\caption{Adaptive distributed consensus-based RGP}
	\label{alg:1}
	\begin{algorithmic}[1]
		\Require $\Gmc=(\Vmc,\Emc)$, $P$, $\phi^{i}$, $\omega^{i}$, $\eta_{L}^{i}$, $\gamma$, $x_{i}(0)$, $\forall i=1,\ldots,n$;
		\Require $c_{i}^{D}(k)$, $c_{i}^{U}(k)$, $\forall i=1,\ldots,n$ and $\forall k=0,1,2,\ldots$
		\Ensure agreement conditions \eqref{eq:terminationcond1}-\eqref{eq:terminationcond2}
		\State $k\gets0$\label{startAlg1}
		\While{True} 
		\For{all $i =1,\ldots,n$}\label{b:forMCP}
		\State $x_{M}^{i}(k) = MCP_{\Gmc}(x_{j}(k), ~\forall j \in \overline{\Nmc}_{i})$ \label{b:MC1}
		\State $x_{m}^{i}(k) = -MCP_{\Gmc}(-x_{j}(k), ~\forall j \in \overline{\Nmc}_{i})$
		\State $c^{i}_{D}(k) = -MCP_{\Gmc}(-c_{j}^{D}(k), ~\forall j \in \overline{\Nmc}_{i})$
		\State $c^{i}_{U}(k) = -MCP_{\Gmc}(-c_{j}^{U}(k), ~\forall j \in \overline{\Nmc}_{i} )$ \label{e:MC1}
		\EndFor\label{e:forMCP}
		\For{all $i =1,\ldots,n$}\label{b:forxk1}
		\If{$x_{M}^{i}(k) - x_{m}^{i}(k) \leq \gamma$}\label{b:ifW}
		\State \textbf{break} the main \textbf{while} loop
		\EndIf\label{e:ifW}
		\State $c^{i}(k) = \min(c^{i}_{D}(k),c^{i}_{U}(k))$\label{b:update}
		\State $\eta^{i}(k) =$
		\NoNumber{$ \max(\eta_{L}^{i}, 1-c^{i}(k)/ (\omega^{i} \max(-x_{m}^{i}(k),x_{M}^{i}(k)) ))$}  
		\State $x_{i}(k+1) =  $\label{e:update}
		\NoNumber{$\eta^{i}(k)x_{i}(k)+(1-\eta^{i}(k)) \sum_{j \in \overline{\Nmc}_{i} } p_{ij} x_{j}(k)$}
		\EndFor\label{e:forxk1}
		\State $k \gets k+1$
		\EndWhile
	\end{algorithmic}
\end{algorithm}


Specifically, the proposed algorithm (pseudocode in Alg. \ref{alg:1}) requires the following information: the (detrended) initial condition $x(0)$, the adjoint graph $\Gmc$ of the given network topology $\Gmc^{o}$, the constraint functions $c_{i}^{D}$ and $c_{i}^{U}$, for all $i=1,\ldots,n$. Moreover, each agent $i$ needs to store beforehand a local copy -- indicated by the same symbol with superscript $i$ -- of constant $\eta_L$ computed as in \eqref{etal}, weights $p_{ij}$, $\forall j \in \Nmc_{i}$, obtained via \eqref{metro} in relation to $\Gmc$, threshold $\gamma$ and the diameter $\phi$ of the underlying network $\Gmc$. It is worth noting that quantities such as $\eta_{L}$ or $\phi$ could be also retrieved during the initialization of this procedure in a distributed fashion, exploiting decentralized algorithms (see e.g. ~\cite{ThiAlain2014,GusrialdiQu2017} to estimate the eigenvalues needed in the calculations of $\eta_{L}$ and ~\cite{PelegRodittyTal} for $\phi$).

From the first half of Alg. \ref{alg:1} (lines \ref{startAlg1}--\ref{e:forMCP}), it is possible to understand how preliminary quantities are retrieved in a distributed way. Indeed, lines \ref{b:forMCP}-\ref{e:forMCP} represent an intermediate call to a subroutine running the MCP. 
In particular, the computation of $x_{M}^{i}(k)$ and $x_{m}^{i}(k)$ allows to obtain a local version of quantities $\left\|x(k)\right\|_{\infty} = \max(|x_{m}^{i}(k)|,|x_{M}^{i}(k)|) $, $\forall i = 1,\ldots,n$, and $W(x(k))=x_{M}^{i}(k)-x_{m}^{i}(k)$, $\forall i = 1,\ldots,n$. Remarkably, the four different invocations at lines \ref{b:MC1}-\ref{e:MC1} to the 
MCP can be parallelized. To be precise, quantities $c^{i}_{D}(k)$ and $c^{i}_{U}(k)$ could be determined beforehand, as these do not depend on the state $x(k)$. 
We also recall that the MCP converges over any given undirected and connected topology $\Gmc$ and returns the maximum value of the considered quantity 
in at most a time proportional to the diameter $\phi$. This fact is exploited to guarantee that this stage of the main Alg.~\ref{alg:1} be executed within the time interval $[k,k+1]$.

The second part of Alg.~\ref{alg:1} from line \ref{b:forxk1} to line \ref{e:forxk1} addresses the main body of the provided solution to the reference generation. In particular, lines \ref{b:ifW}-\ref{e:ifW} implement the termination condition described in \eqref{eq:terminationcond1} (exiting the {\bf while} loop), and lines \ref{b:update}-\ref{e:update} illustrate how the state $x(k)$ is locally updated into $x(k+1)$ according to equations \eqref{eq:3}, \eqref{eq:c_min} and \eqref{eq:3.33}.
As a final note, we highlight how the two {\bf for} loops at lines~\ref{b:forMCP} and \ref{b:forxk1} are indeed parallel executions on all nodes and not sequential operations. An overview on the fundamental mechanism beneath the proposed RGP can be now sketched in the workflow diagram depicted in Fig. \ref{fig:RGPblockdiagram}.

To conclude, we point out that a measure of the convergence rate of Alg.~\ref{alg:1} can be given by  

\begin{equation}\label{eq:Rconvergence}
	R :=  \phi \left(1+\dfrac{d_{M}-d_{m}}{2}\right)^{\rho} \geq 1.
\end{equation}
Specifically, the lower the value of $R$ the faster Alg.~\ref{alg:1} terminates satisfying the agreement conditions \eqref{eq:terminationcond1}-\eqref{eq:terminationcond2},
since the expression in \eqref{eq:Rconvergence} is established upon the results obtained from the following theorem (proof and more details in App.~\ref{app:convergenceanalysis}). Moreover, the rate $R$ takes into account the computational burden due to max-consensus protocol, which requires at most $\phi$ steps to be run. The lower bound $R=1$ is attained when $\Gmc$ is \textit{regular} ($d_{M}=d_{m}$) and $\phi=1$: this occurs if and only if an OCN has $m=3$ junctions.

\begin{theorem}\label{thm:convergence}
		The RGP \eqref{eq:3}, with $\eta(k)$ selected as in \eqref{eq:3.33}, converges to average consensus as $k \rightarrow \infty$. In particular, by taking $W(x(k))$ in \eqref{eq:lyap} as a Lyapunov function, $W(x(k))$ vanishes with an approximate exponential decay rate $r \in (0,1)$ depending on the graph radius $\rho$ and the minimum nonzero entry of $Q_{\eta}(k)$ over time $k$. 
		For dynamics in \eqref{eq:3}, discrete-time consensus (Def.~\ref{def:consensus}) is achieved as
		\begin{equation}\label{eq:averageconsensusdef}
			\underset{k\rightarrow\infty}{\lim} x_{i}(k) = \alpha, \quad \forall i = 1,\ldots,n,
		\end{equation}
		with $\alpha$ defined in Asm. \ref{ass} and
		\begin{equation}\label{eq:W0converging}
			0 \leq \underset{l\rightarrow\infty}{\lim} W(l \rho) \leq \underset{l\rightarrow\infty}{\lim} r^{l} W(x(0)) =0,
		\end{equation}
		with $l\in \mathbb{N}$ and being $r$ upper bounded by
		\begin{equation}\label{eq:upperboundr}
			\overline{r} = 1-\left((1-\eta_{H})\underline{\xi} \right)^{\rho} < 1,
		\end{equation}
		and lower bounded by
		\begin{equation}\label{eq:lowerboundr}
			\underline{r} = 1- \left(\overline{\xi}+(1-\overline{\xi})\eta_{H}\right)^{\rho} > 0,
		\end{equation}
		where the topology-dependent  constants $\underline{\xi}$, $\overline{\xi}$ and the upper bound $\eta_{H}$ are respectively defined as in \eqref{eq:minentryP}, \eqref{eq:maxentryP}, and \eqref{eq:etaHdef}.
	\end{theorem}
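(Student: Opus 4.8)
The plan is to use the max--min disagreement $W(x(k))$ from \eqref{eq:lyap} as a Lyapunov function and to establish a geometric contraction over blocks of $\rho$ consecutive iterations. First I would record two elementary facts valid for every $k$. Since each $Q_\eta(k)$ in \eqref{eq:3} is row-stochastic with nonnegative entries, every component $x_i(k+1)$ is a convex combination of the entries of $x(k)$; hence $\max_i x_i(k)$ is non-increasing, $\min_i x_i(k)$ is non-decreasing, and therefore $W(x(k+1))\le W(x(k))$ for all $k$. Next, using Prop.~\ref{prop:etabounds} (so that $\eta(k)\in[\eta_L,\eta_H]\subseteq(0,1)$) together with the entry bounds on the MH matrix $P$ (App.~\ref{app:prelimlemmas}), I would bound the nonzero entries of $Q_\eta(k)$: every edge-associated off-diagonal entry satisfies $q_{ij}(k)=(1-\eta(k))p_{ij}\ge(1-\eta_H)\underline{\xi}$, and each diagonal entry satisfies $(1-\eta_H)\underline{\xi}\le q_{ii}(k)\le\overline{\xi}+(1-\overline{\xi})\eta_H$, since $q_{ii}=\eta(k)+(1-\eta(k))p_{ii}$ and $\underline{\xi}\le p_{ii}\le\overline{\xi}$. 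I denote the lower value $\mu:=(1-\eta_H)\underline{\xi}$ and the upper value $M:=\overline{\xi}+(1-\overline{\xi})\eta_H$.

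The core step is to control the $\rho$-step transition matrix $\Phi(k):=Q_\eta(k+\rho-1)\cdots Q_\eta(k)$. Let $c$ be a center of $\Gmc$, i.e.\ a vertex whose eccentricity equals the radius $\rho$. For every node $i$ there is a walk of length exactly $\rho$ from $i$ to $c$: take any shortest path (length $\le\rho$) and pad it with self-loops, which is admissible because all diagonal entries are positive. The weight of this single walk lower-bounds $[\Phi(k)]_{ic}$, giving $[\Phi(k)]_{ic}\ge\mu^{\rho}$ for every $i$. Thus all rows of $\Phi(k)$ share the common positive column $c$, so the Dobrushin (scrambling) coefficient obeys $\tau(\Phi(k))=1-\min_{i,j}\sum_\ell\min([\Phi(k)]_{i\ell},[\Phi(k)]_{j\ell})\le 1-\mu^{\rho}$. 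Since $W(\Phi(k)x)\le\tau(\Phi(k))\,W(x)$, I obtain the block contraction $W(x(k+\rho))\le(1-\mu^{\rho})\,W(x(k))=\overline{r}\,W(x(k))$, with $\overline{r}$ as in \eqref{eq:upperboundr}. Iterating over the subsequence $k=l\rho$ yields $W(x(l\rho))\le\overline{r}^{\,l}W(x(0))\to 0$, which is exactly \eqref{eq:W0converging}; combined with the monotonicity above, this forces $W(x(k))\to 0$ along the whole sequence. I note that only the radius $\rho$ (not the diameter) is needed here, because a single common positive column already drives the scrambling bound.

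To identify the limit and conclude average consensus \eqref{eq:averageconsensusdef}, I would invoke double stochasticity: $\ones_n^\top Q_\eta(k)=\ones_n^\top$, so the mean $n^{-1}\ones_n^\top x(k)$ is invariant and equals $\alpha$ for all $k$. Because $W(x(k))\to 0$ means $x(k)\to\beta\ones_n$ for some scalar $\beta$, and the preserved mean forces $\beta=\alpha$, we get $\lim_{k\to\infty}x_i(k)=\alpha$ for every $i$; with $\alpha=0$ under Asm.~\ref{ass} this is average consensus in the sense of Def.~\ref{def:consensus}.

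The delicate part is the lower bound \eqref{eq:lowerboundr}, i.e.\ certifying that the effective per-block rate $r$ cannot be smaller than $\underline{r}=1-M^{\rho}$. The naive extremal-node estimate only gives $W(x(k+1))\ge(q_{i^+i^+}+q_{i^-i^-}-1)W(x(k))$, whose coefficient is not universally bounded below by a useful quantity, so a worst-case argument is required instead. The idea I would pursue is to exhibit a configuration realizing the slowest admissible mixing: track a node whose self-retention is governed by the maximal diagonal weight $M$ together with an antipodal extreme value, and bound the row overlap in $\Phi(k)$ from above by $M^{\rho}$ for a suitable pair of vertices, which is equivalent to $\tau(\Phi(k))\ge 1-M^{\rho}$. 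This shows the spread cannot be reduced by more than a factor $M^{\rho}$ per block, giving $r\ge\underline{r}$; since $M<1$ (because $\eta_H<1$ and $\overline{\xi}<1$) we have $\underline{r}\in(0,1)$, and the two bounds together certify the claimed approximate exponential rate $r\in[\underline{r},\overline{r}]\subset(0,1)$. I expect this lower bound to be the main obstacle, since it hinges on isolating the least-mixing direction rather than on a uniform contraction estimate.
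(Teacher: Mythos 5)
Your treatment of the contraction and of the average-consensus limit is correct and essentially the paper's own argument: the paper likewise forms the $\rho$-step product $B(k)=\prod_{\sigma=k}^{k+\rho-1}Q_{\eta}(\sigma)$, argues that connectivity plus positive self-loops force a column of $B(k)$ to be entrywise at least $\epsilon^{\rho}$ (where $\epsilon$ is the minimum nonzero entry of $Q_{\eta}(k)$ over time), and deduces $W(x(k+\rho))\leq(1-\epsilon^{\rho})W(x(k))$ by a max--min manipulation equivalent to your scrambling-coefficient bound. Your explicit ``shortest path to a center padded with self-loops'' justification is in fact more precise than the paper's one-line claim, and your identification of the limit via invariance of $\ones_{n}^{\top}x(k)$ is a clean alternative to the paper's argument through the structure of the doubly-stochastic limit matrix. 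These are cosmetic differences.

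The genuine gap is in your reading of the lower bound \eqref{eq:lowerboundr}. In the paper, $r$ is not ``the effective per-block decay rate of $W$''; it is the specific constant $r:=1-\epsilon^{\rho}$ appearing in the sufficient estimate $W(x(k+\rho))\leq rW(x(k))$, with $\epsilon=\inf_{k}\min_{i}\min_{j\in\overline{\Nmc}_{i}}q_{ij}(k)$. The bounds $\underline{r}\leq r\leq\overline{r}$ then follow immediately from sandwiching $\epsilon$ itself: $(1-\eta_{H})\underline{\xi}\leq\epsilon\leq\overline{\xi}+(1-\overline{\xi})\eta_{H}$, using $\eta(k)\in[\eta_{L},\eta_{H}]$ and Lem.~\ref{prop:metropolis}. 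There is nothing delicate to prove. The statement you set out to establish instead --- that the disagreement cannot contract by more than a factor $\underline{r}$ per block, i.e.\ a lower bound on the scrambling coefficient of every $\rho$-step product --- is both unnecessary and false in general: on a well-connected graph (e.g.\ the triangular graph $\Tmc$ of Sec.~\ref{sec:numericalsimulations}, where $\rho=2$) a single block can shrink $W$ far below $\underline{r}\,W$, since the overlap between two rows of $\Phi(k)$ is a sum over \emph{all} columns and is not controlled by the single-entry bound $M^{\rho}$. So the ``main obstacle'' you anticipate is an artifact of proving the wrong claim; replace that paragraph with the one-line observation that $\underline{r}=1-\overline{\epsilon}^{\,\rho}$ and $\overline{r}=1-\underline{\epsilon}^{\,\rho}$ bound $r=1-\epsilon^{\rho}$ because $\underline{\epsilon}\leq\epsilon\leq\overline{\epsilon}$, and the proof is complete.
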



\section{Numerical simulations} \label{sec:numericalsimulations}

\begin{table*}[t!]
	\centering
	\begin{tabular}{|c| c| c| c| c| c| c| c| c| c| c| c|}\hline 
		$    $ & $\varsigma_{P}$ & $d_{m}$ & $d_{M}$ & $\omega$ & $\eta_{L}$ & $\eta_{H}$ & $\underline{\xi}$ & $\overline{\xi}$ & $\rho$ & $\phi$ 
		& $R$  \\
		\hline
		$\Gmc$ & $0.371$ & $2$ & $5$ & $1.667$ & $\zeta$ & $0.912$ & $0.167$ & $0.667$ & $5$ & $7$ 
		& $683.6$ \\\hline    
		$\Tmc$ & $0.220$ & $40$ & $40$ & $1.951$ & $\zeta$ & $0.925$ & $0.024$ & $0.024$ & $2$ & $2$ 
		& $2$  \\\hline
	\end{tabular}
	\caption{Topological constants pertaining to graphs $\Gmc$ and $\Tmc$.}
	\label{Tab:topopar}
\end{table*}

\begin{figure*}[t!] 
	\centering
	\vspace{-6pt}
	\subfloat[Average consensus dynamics]{\includegraphics[width=0.25\textwidth, 
		]{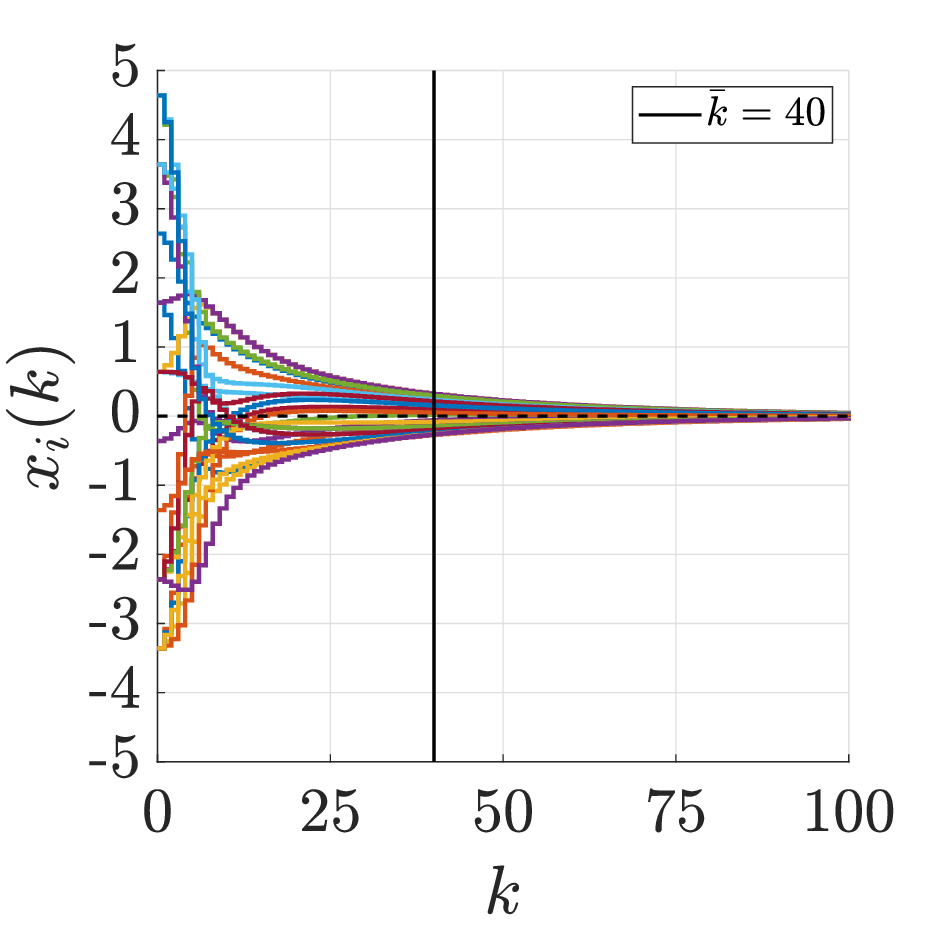}\label{fig:traj1}}
	\subfloat[Disagreement function decay]{\includegraphics[width=0.25\textwidth, 
		]{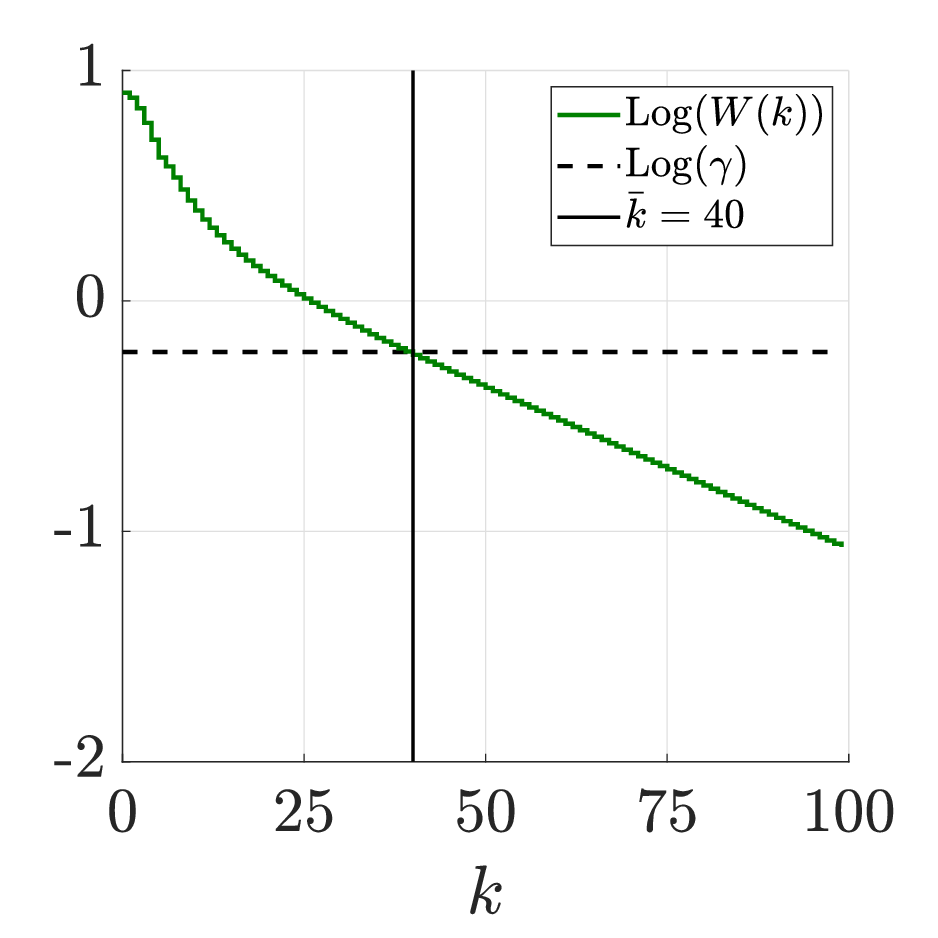}\label{fig:W1}}
	\subfloat[Feasibility check]{\includegraphics[width=0.25\textwidth, 
		]{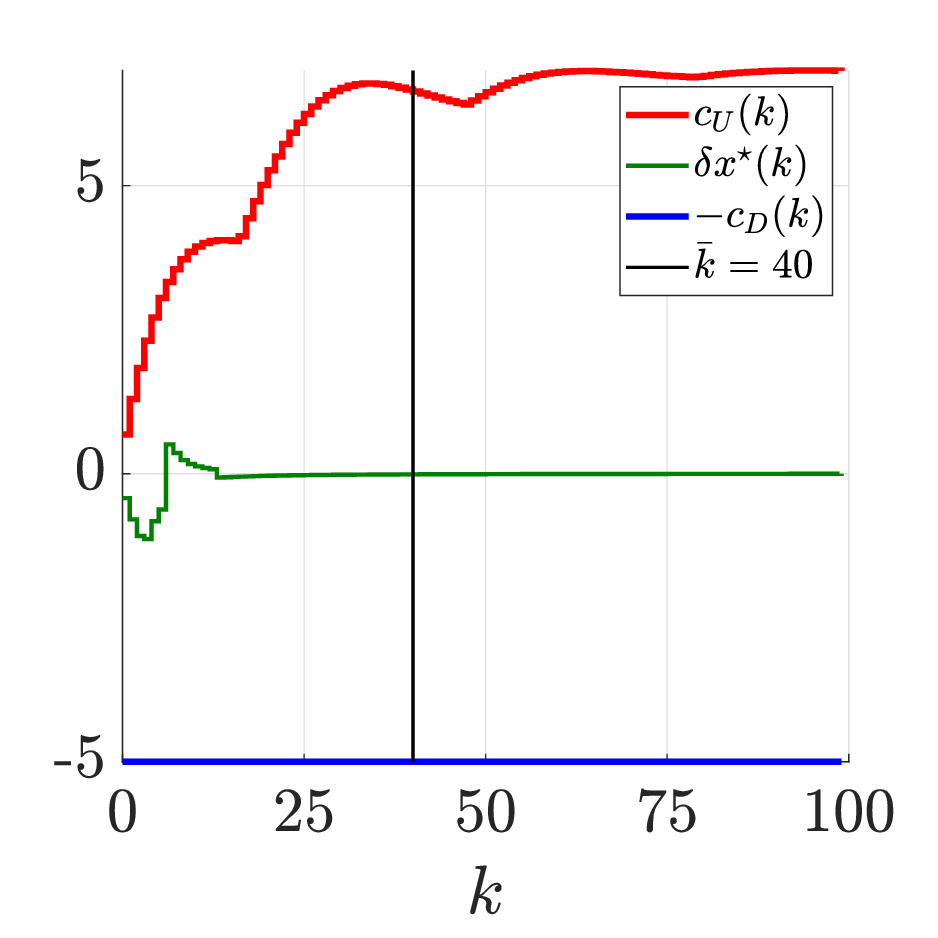}\label{fig:constr1}}
	\subfloat[Behavior of $\eta(k)$]{\includegraphics[width=0.25\textwidth, 
		]{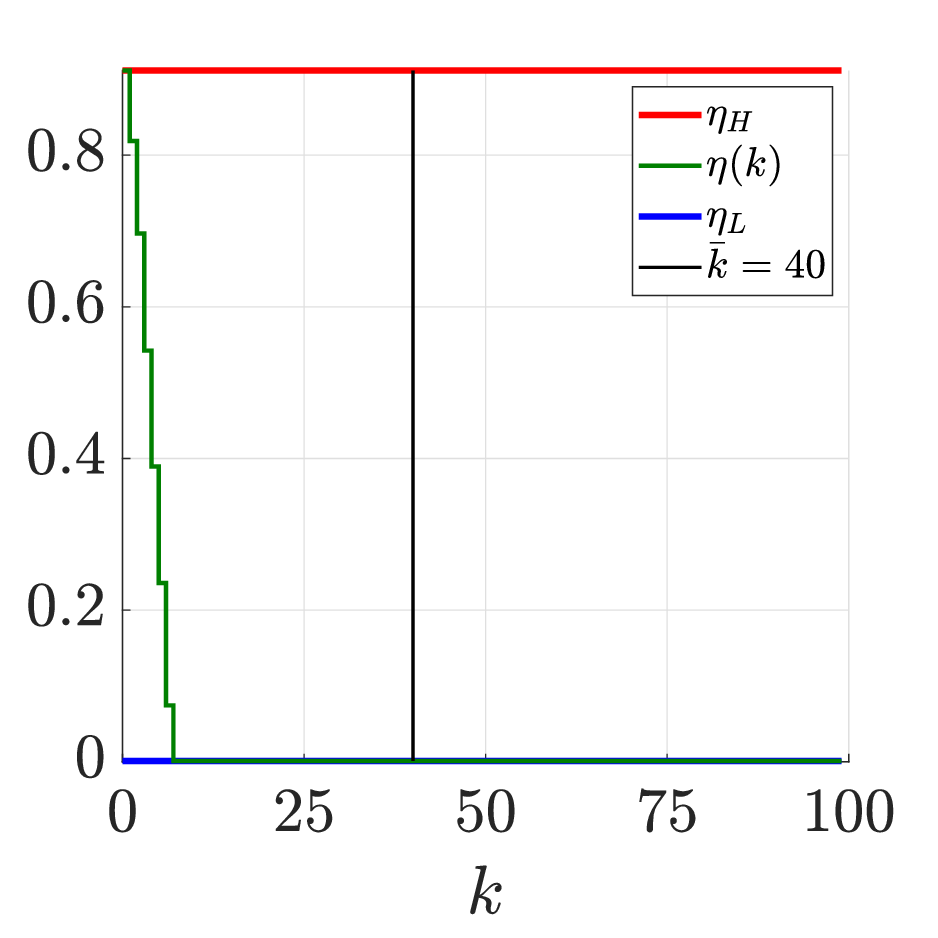}\label{fig:eta1}}\\
	\subfloat[Average consensus dynamics]{\includegraphics[width=0.25\textwidth, 
		]{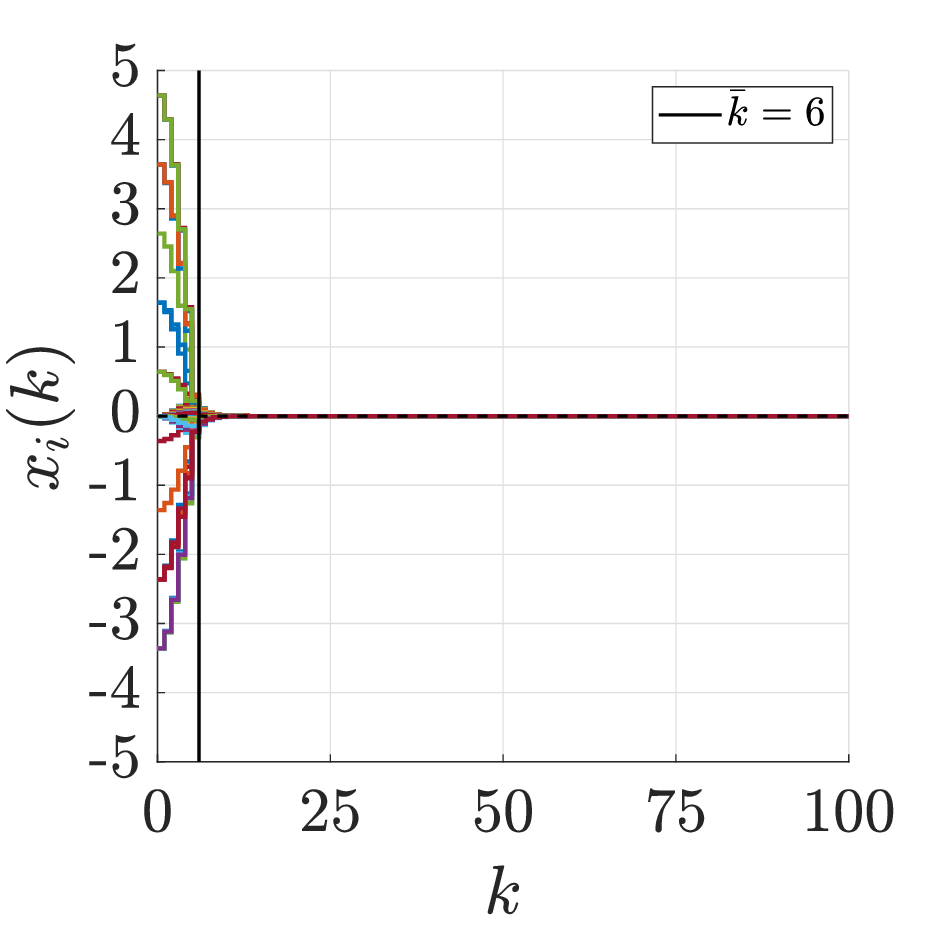}\label{fig:traj2}}
	\subfloat[Disagreement function decay]{\includegraphics[width=0.25\textwidth, 
		]{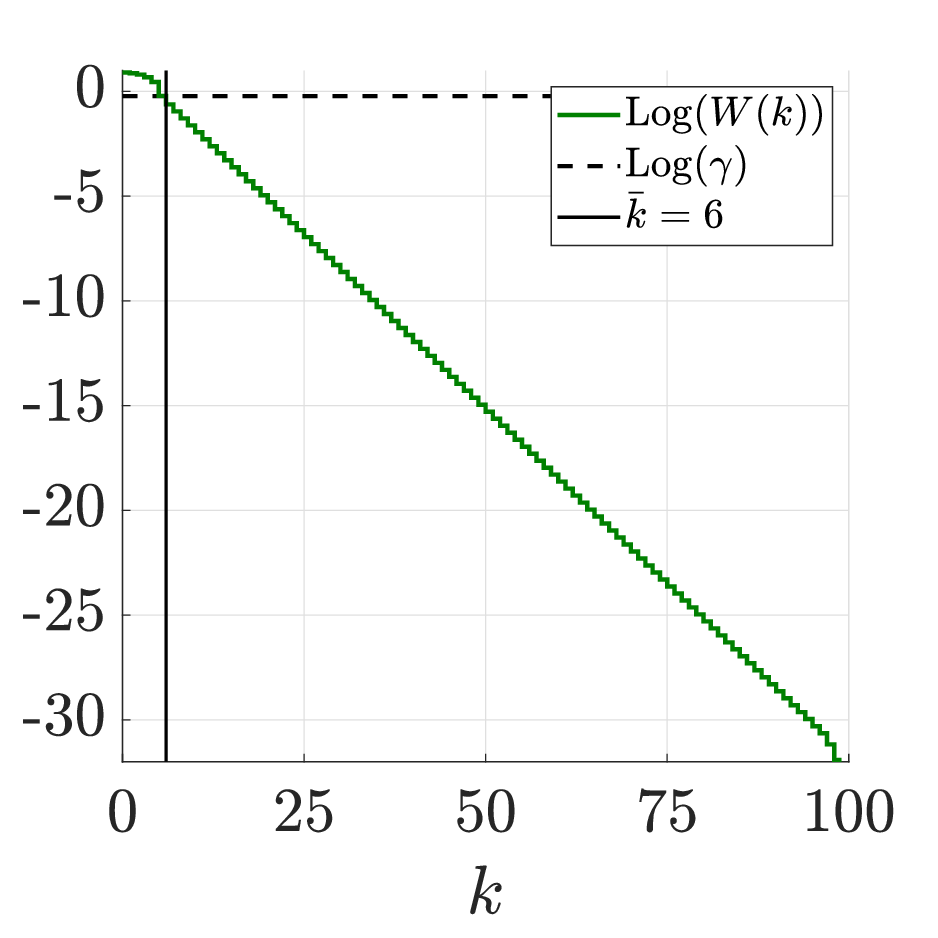}\label{fig:W2}}
	\subfloat[Feasibility check]{\includegraphics[width=0.25\textwidth, 
		]{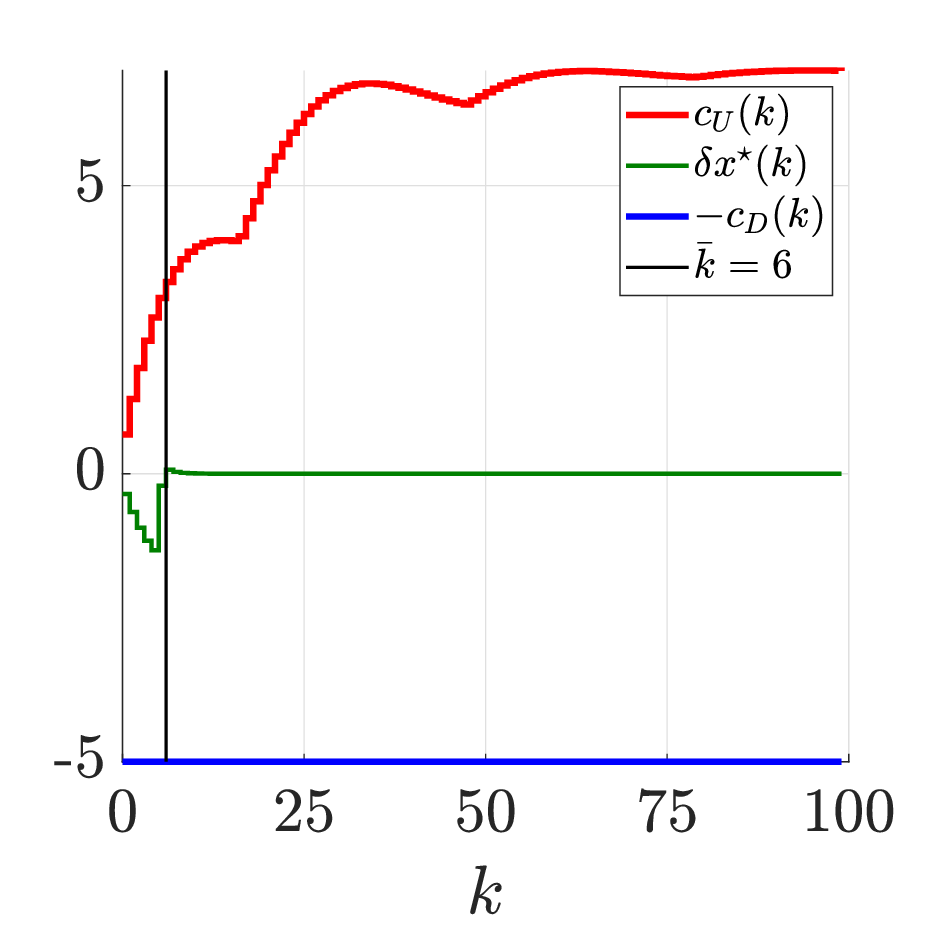}\label{fig:constr2}}
	\subfloat[Behavior of $\eta(k)$]{\includegraphics[width=0.25\textwidth, 
		]{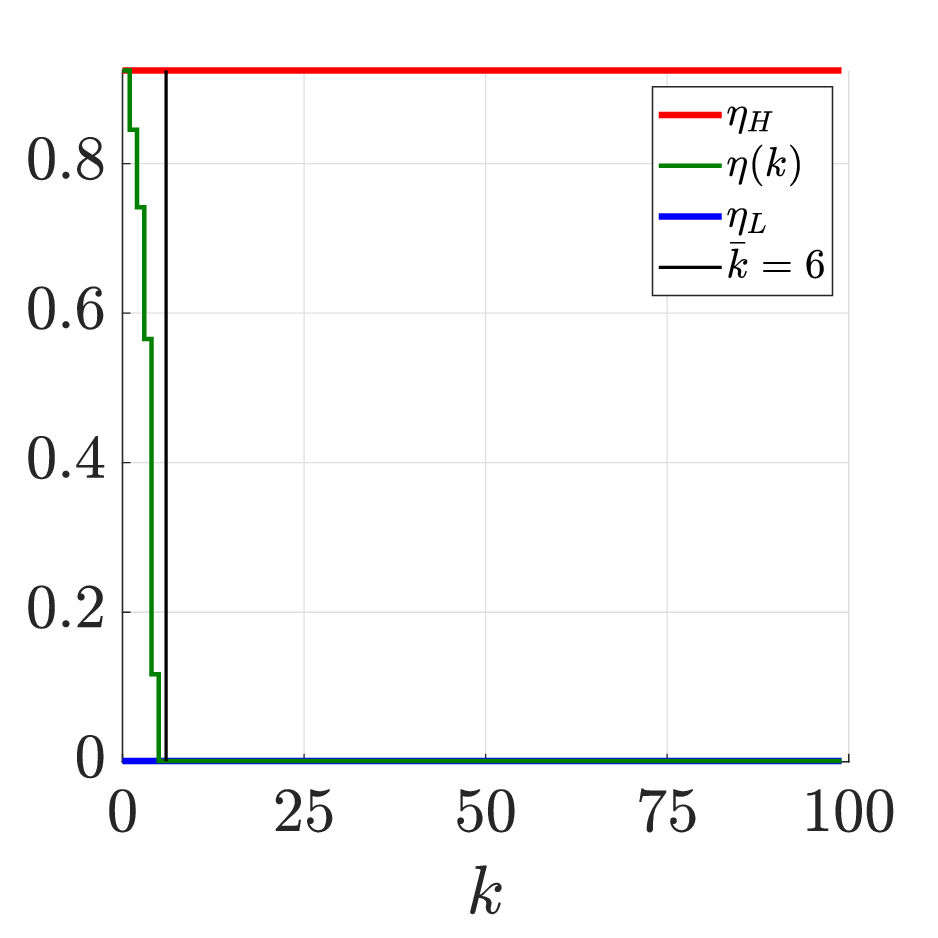}\label{fig:eta2}}\\
	\caption{Reference generation (Alg.~\ref{alg:1}) for the Cavallino OCN case study applied on graphs $\Gmc$ (a)-(d) and $\Tmc$ (e)-(h). Panels (a)-(e) show the behavior of the controlled quantity and (b)-(f) reports the related disagreement function. In (c)-(g) it appears that the level variation $\delta x^\ast$ does not violate the given channel constraints.}
	\label{fig:sim_Cavallino}
\end{figure*}

We discuss here some numerical simulations to support our previous results. 
We consider a portion of the Cavallino network in Fig. \ref{fig:cavallinowaternetwork} having $m=22$ junctions and $n_{1}=25$ channels, and compare it with its corresponding \textit{complete} topology (see Fig.~\ref{fig:CavallinoTopology}). This latter is characterized by the same number of junctions ($m=22$) and the maximum allowed number of channels $n_{2} = m(m-1)/2= 231$: in this sense, although it may result unfeasible from a practical point of view, it represents an ideal fully connected reference structure. 
In particular, we denote with $\Gmc^{o}$ the graph modeling the Cavallino network, illustrated in Fig. \ref{fig:Go1}, and with $\Kmc^{o}$ its complete version, depicted in Fig. \ref{fig:Go2}. 
The adjoint graph $\Gmc = \mathrm{L}(\Gmc^{o})$ of $\Gmc^{o}$ with $n_{1}$ nodes is obtained as in Fig. \ref{fig:G1}, while the adjoint graph $ \Tmc = \mathrm{L}(\Kmc^{o})$ of $\Kmc^{o}$ is the \textit{triangular} graph~\cite{BrualdiRyser1991} with $n_{2}$ nodes, represented in Fig.~\ref{fig:G2}.
For both networks $\Gmc$ and $\Tmc$, assumptions Asm.~\ref{asm:network} - Asm.~\ref{ass} are made and the following parameters are set: $T_{s}=1$, $\gamma = 0.6$, $k_{MAX} = 100$, $\zeta = 0.001$. The constraint functions are uniformly chosen for all $i$ as $c_{i}^{D}(k) = 5$, $\forall k \in \mathbb{N}$, and $c_{i}^{U}(k) = 7(1-0.95^{k+1}) [1- 0.95^{k+1} |\cos(k/10)|] \geq 0.6825$. 
For both $\Gmc$ and $\Tmc$, the corresponding initial conditions $x^{(\Gmc,0)}$ and $x^{(\Tmc,0)}$ satisfy $\left\|x^{(\Gmc,0)}\right\|_{\infty} = \left\|x^{(\Tmc,0)}\right\|_{\infty} \simeq 4.64$, as they are fairly selected to verify

\begin{equation*}
	\forall e_{\ell_{2}} \in \Emc^{o}_{2} :~~ x^{(\Tmc,0)}_{e_{\ell_{2}}} = \begin{cases}
		x^{(\Gmc,0)}_{e_{\ell_{1}}}	, \quad & \text{if } e_{\ell_{2}} \in \Emc_{1}^{o}; \\
		0, \quad & \text{otherwise}.
	\end{cases}
\end{equation*}
%

Within this setup, the remaining topological parameters characterizing graphs $\Gmc$ and $\Tmc$ are collected in Tab \ref{Tab:topopar}.  

\begin{figure*}[t!] 
	\centering
	\vspace{-6pt}
	\subfloat[Average consensus dynamics]{\includegraphics[width=0.25\textwidth, 
		]{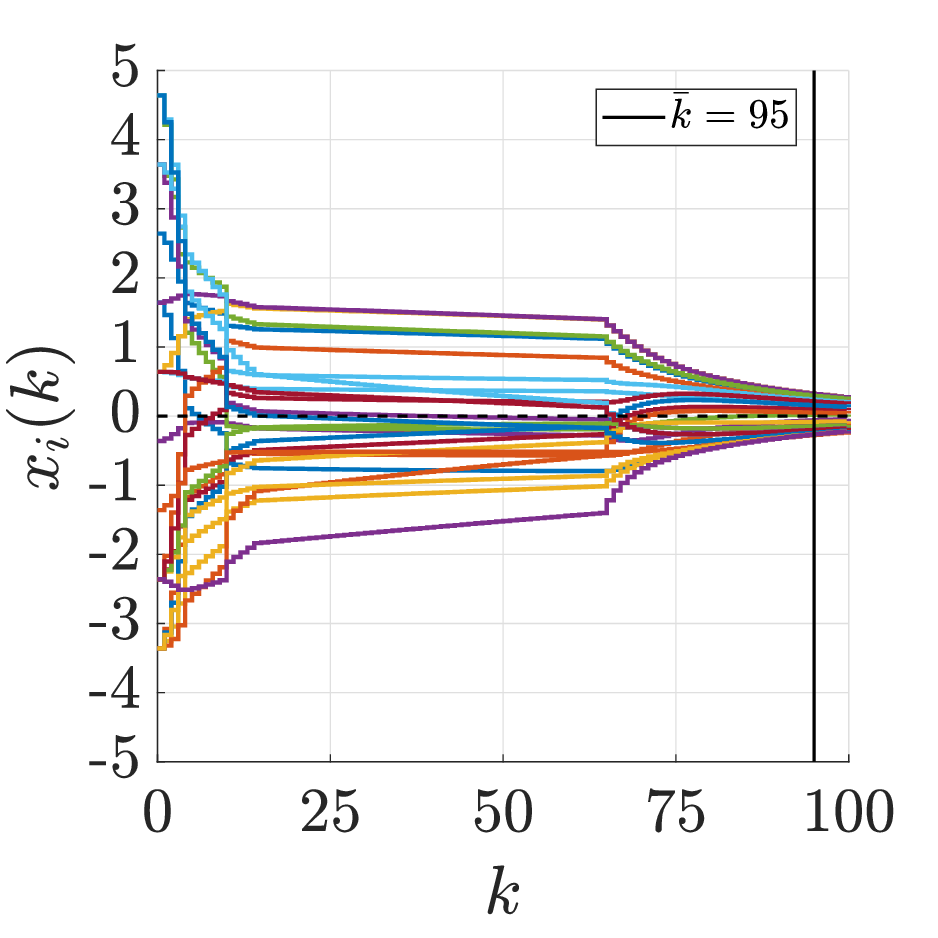}\label{fig:traj1f}}
	\subfloat[Disagreement function decay]{\includegraphics[width=0.25\textwidth, 
		]{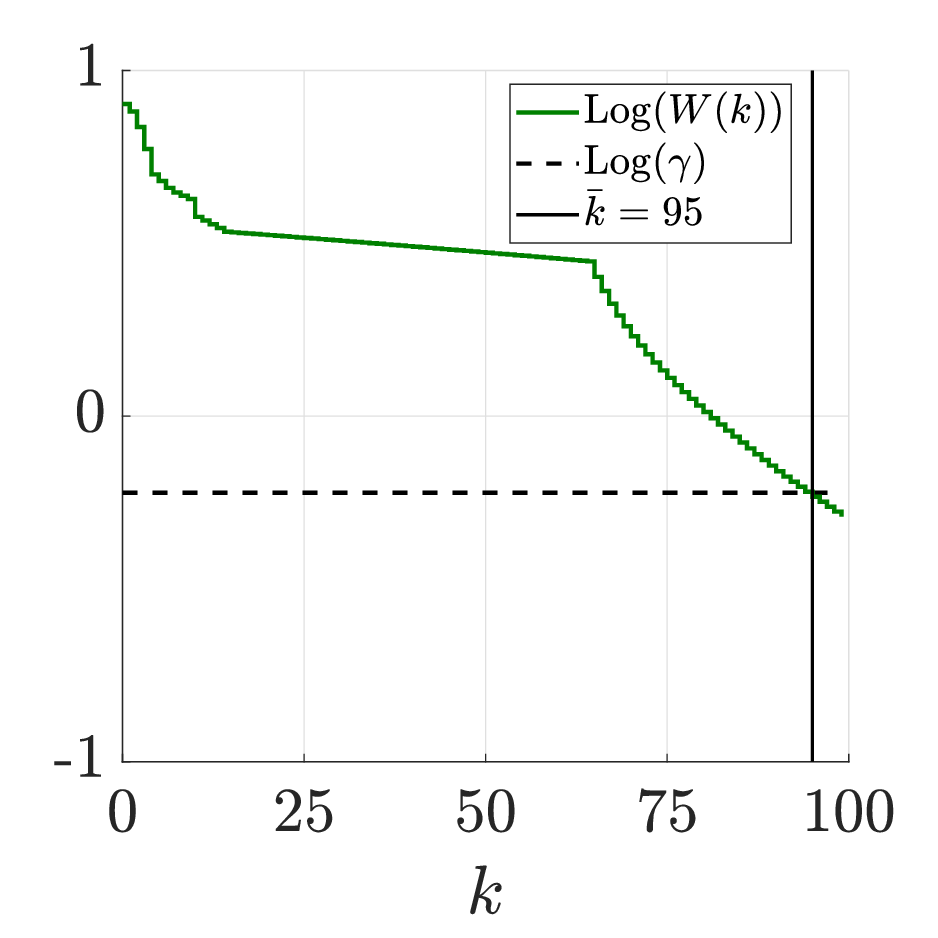}\label{fig:W1f}}
	\subfloat[Feasibility check]{\includegraphics[width=0.25\textwidth, 
		]{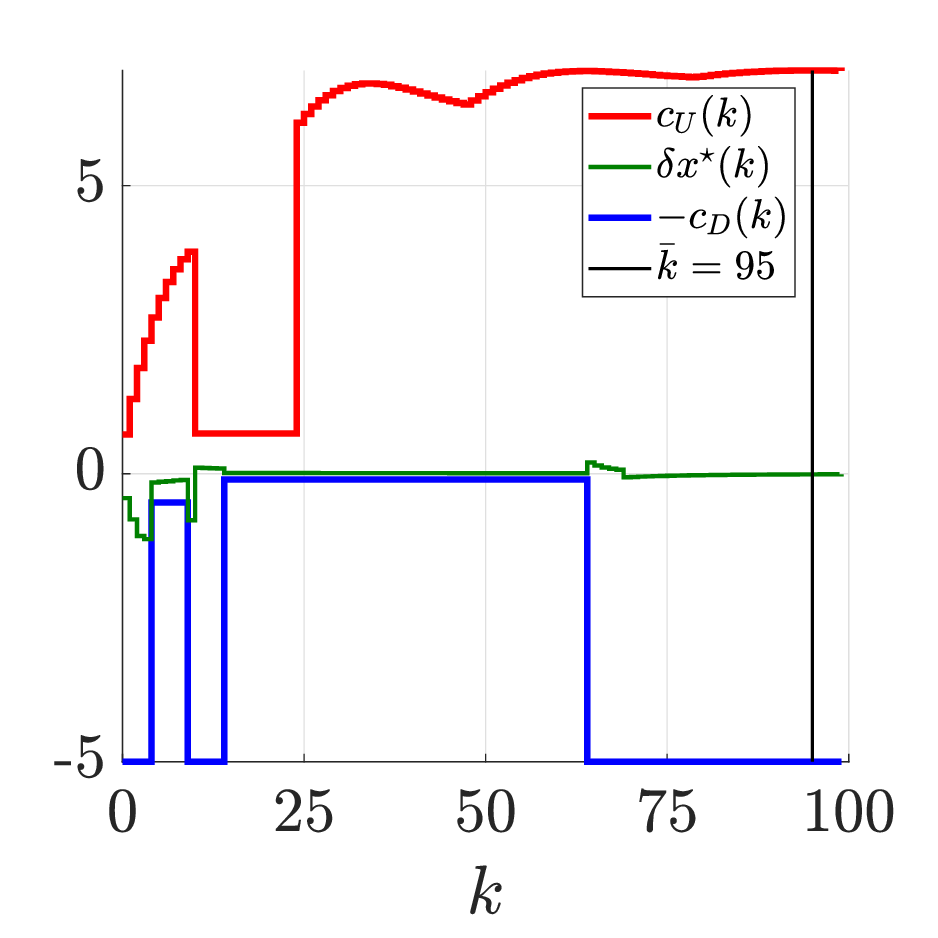}\label{fig:constr1f}}
	\subfloat[Behavior of $\eta(k)$]{\includegraphics[width=0.25\textwidth, 
		]{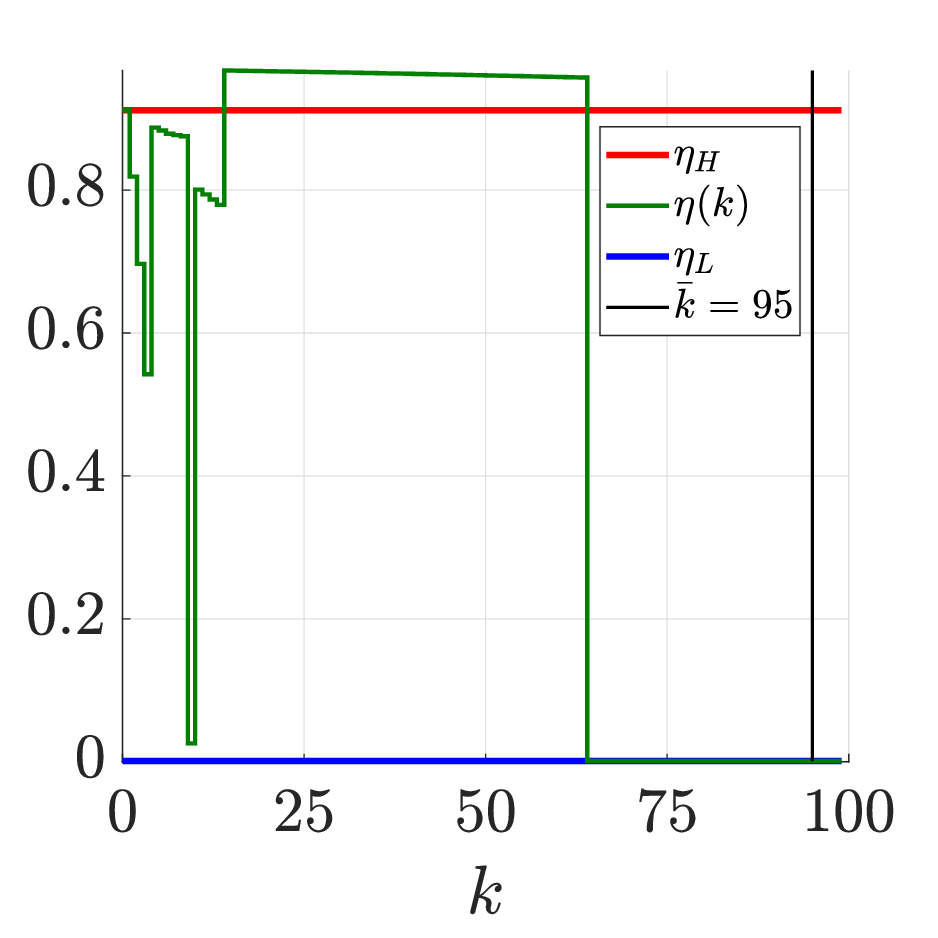}\label{fig:eta1f}}
 \caption{Reference generation (Alg.~\ref{alg:1}) for the Cavallino OCN case study applied on graphs $\Gmc$ for a faulty scenario where the channel constraint capacity values are artificially abruptly modified ($c_U$,$c_D$ in (c)).}
	\label{fig:sim_Cavallino2}
\end{figure*}

Fig. \ref{fig:sim_Cavallino} depicts the outcome of Alg. \ref{alg:1} executed on networks $\Gmc$ and $\Tmc$, respectively. 
Specifically, Figs. \ref{fig:traj1}-\ref{fig:traj2} show how the state trajectories representing the water increment reference evolve as a function of time, starting from a generic initial condition: in practice, starting from an uneven water distribution over the network, Alg. \ref{alg:1} guarantees to reach an equilibrium among the channels.

It is worth to note that the dynamics corresponding to $\Tmc$ converges to average consensus faster than that of $\Gmc$, as $\Tmc$ is a regular graph with higher degree $d_{M2}=d_{m2}=40$ w.r.t. to those of $\Gmc$ (ranging from $2$ to $5$); furthermore, $\Tmc$ has a total of $(n_{2}d_{M2}/2)=4620$ edges and its diameter is given by $\phi_{2} = \min(2,n_{2}-2)=2$. 
These two facts imply that information is exchanged far more rapidly among the nodes of $\Tmc$ (see also the convergence rate indexes in Tab. \ref{Tab:topopar}). The faster convergence of dynamics related to $\Tmc$ is more explicitly shown in Fig. \ref{fig:W2}, where for $\bar{k}=6$ it occurs that $W(\bar{k}) := W(x(\bar{k})) \leq \gamma$. Indeed, the max-min disagreement function decreases more quickly w.r.t. that in Fig. \ref{fig:W1}, wherein $\bar{k}=40$ iterations are needed to execute the whole water distribution algorithm and reach a balanced condition. 
Nonetheless, from both these latter plots it is possible to appreciate the exponential decay of $W(k)$. 
In Figs.~\ref{fig:constr1}-\ref{fig:constr2} it is illustrated the behavior of quantity $\delta x^{\star}(k) := S_{j}(k)\left\|x(k)\right\|_{\infty}$, where $S_{j}(k) \in \{-1,0,1\}$ is the sign of the component $x_{j}(k)$ such that $|x_{j}(k)| \geq |x_{i}(k)|$ holds for all $i = 1,\ldots,n$, with $n\in \{n_{1},n_{2}\}$. Clearly, the imposed constraints are not violated for all $k=0,1,2,\ldots$, since $\delta x^{\star}(k)$ remains bounded within functions $-c_{i}^{D}$ and $c_{i}^{U}$.
However, Fig.~\ref{fig:constr2} shows that trajectory $|\delta x^{\star}(k)| = \left\|x(k)\right\|_{\infty}$ exhibits a more peaked and faster behavior than the corresponding one in Fig. \ref{fig:constr1}: this is due to the fact that $\Gmc$ is a less connected structure, hence less subject to fast and strong variations of the dynamics. 
Moreover, it is worth to observe Figs. \ref{fig:eta1}-\ref{fig:eta2}, as they detail the evolution of parameter $\eta(k)$. Counterintuitively, $\eta(k)$ approaches the suboptimal value $\eta_{L}$ faster when Alg.~\ref{alg:1} is implemented on $\Gmc$. In reality, this behavior is a consequence of the fact that dynamics corresponding to $\Tmc$ is faster, and for this reason it needs to be slowed down by higher values of $\eta(k)$ in order not to violate the imposed water flow constraints.

Finally, a further scenario is also considered, in which the upload and download constraint functions are artificially modified to mimic a sudden and unpredicted failure of the channel capacities. 
It can be seen in Figs.~\ref{fig:traj1f}-\ref{fig:eta1f} that, under the application of Alg.~\ref{alg:1} also in such non-nominal conditions, the abrupt $c_U$-$c_D$ variations (Fig.~\ref{fig:constr1f}) yield some real-time adjustment of the $|\delta x^{\star}(k)|$ variable, which is reflected in the overall slower convergence ($\bar{k}=95$) with respect to the no-fail case of Figs.~\ref{fig:traj1}-\ref{fig:eta1}. Nonetheless, also in this situation, convergence is attained despite the violation of the theoretical $\eta$ bounds, as shown in Fig.~\ref{fig:eta1f}. 

In the light of this analysis, it can be observed that the proposed Alg.~\ref{alg:1} operates correctly and robustly over diverse and realistic network conditions, despite that there exist obvious differences dictated by the underlying topology on how fast the agreement is reached. 
Indeed, as requested by Problem~\ref{problem}, it is guaranteed convergence to average consensus for the input reference feeding each agent in Fig.~\ref{fig:ControlSchemeCavallinoDraft} and constraints \eqref{eq:heightconstraints} are satisfied at all time instants $k=0,1,2,\ldots$.


\section{Conclusions and continuing research}\label{chp:conclusion}
In this work, an automated solution to the water level regulation within an OCN is presented and it is designed an adaptive consensus-based algorithm providing the reference to attain an even distribution of water levels.
After modeling and characterizing the OCN with the tools of graph theory, the proposed approach leverages fully decentralized computations and it also considers the presence of water exchange capacity limits, which are embedded in the developed iterative procedure as download and upload constraints for each branch of the system. In addition, it is shown that the devised algorithm converges exponentially fast.

A potential research direction is represented by the analysis and design of the entire control scheme accounting for both reference generation and state regulation.
Finally, a couple of interesting investigation extensions are embodied by the development of the presented scheme exploiting directed topologies and by the improvement of its convergence properties.

\appendix
\section{}

\subsection{Preliminary lemmas}\label{app:prelimlemmas}

\begin{lemma}\label{prop:metropolis}
	Let us consider an undirected and connected graph $\Gmc = (\Vmc, \Emc)$. 
	Let us also define the MH matrix $P \in \mathrm{stoch}^{2}(\mathbb{R}^{n \times n})$ associated to $\Gmc$ as in \eqref{metro}.
	Then the smallest nonzero entry $\underset{p_{ij} >0}{\min} \{p_{ij}\} $ of $P$ is yielded by
	\begin{align}\label{eq:minentryP}
		\underset{p_{ij} >0}{\min} \{p_{ij}\} &= \underset{(i,j) \in \Emc}{\min} \{p_{ij}\} = \underset{i=1,\ldots,n}{\min} \{p_{ii}\} \nonumber\\ 
		&= (1+d_{M})^{-1} =: \underline{\xi} \in (0,1).
	\end{align}
	and the largest entry of $P$ can be upper bounded as
	\begin{equation}\label{eq:maxentryP}
		\max\{p_{ij}\} \leq 1-d_{m}(1+d_{M})^{-1} =: \overline{\xi} \in (0,1),
	\end{equation}
	where equality in \eqref{eq:maxentryP} holds strictly if there exists a node $i$ with $d_{i}=d_{m}$ connected to a node $j$ with $d_{j}=d_{M}$.
\end{lemma}

\begin{proof}
	It is straightforward by property \eqref{metro}.
\end{proof}

\begin{lemma}\label{omeg}
	Let us assign 
	\begin{equation}\label{eq:omegadef}
		\omega :=  2d_M (1+d_M)^{-1}  \in [1,2).
	\end{equation}
	Under the same assumptions of Lem. \ref{prop:metropolis}, for any MH matrix $P \in \mathrm{stoch}^{2}(\mathbb{R}^{n \times n})$ it holds that $||I_{n}-P||_{\infty}\leq \omega$.  
\end{lemma}

\begin{proof}
	Exploiting the Gershgorin's disk theorem \cite{Bell1965} and the structure of $P$ dictated by \eqref{metro}, one has
	\begin{align}
		\left\|I_{n}-P\right\|_{\infty} &\leq \max_{i = 1,\ldots,n} \big{\{}  |1-p_{ii}|+\sum_{\forall j \neq i}|p_{ij}| \big{\}} \nonumber \\
		&= \max_{i= 1,\ldots,n}\{2(1-p_{ii})\} = 2\big{(}1-\min_{i= 1,\ldots,n}\{p_{ii}\}\big{)} , \nonumber
	\end{align}
	where $\underset{i= 1,\ldots,n}{\min} \{p_{ii}\} = (1+d_{M})^{-1}$, as shown in Lem. \ref{prop:metropolis}.
\end{proof}

\subsection{Convergence analysis}\label{app:convergenceanalysis}

The proof of Thm. \ref{thm:convergence} provides convergence guarantees towards average consensus for RGP \eqref{eq:3}, whose distributed implementation is performed through Alg. \ref{alg:1}. The related convergence rate is also discussed subsequently.

\begin{proof}
	Define the product matrices $F(k,\Delta k) = \prod_{\sigma=k}^{k+\Delta k} Q_{\eta}(\sigma)$, $B(k) = F(k,\rho-1)$, with $\Delta k \in \mathbb{N}$. Let us also address the entries of matrices $F(k,\Delta k),B(k),Q_{\eta}(k)$ with $f_{ij}(k,\Delta k) = [F(k,\Delta k)]_{ij}$, $b_{ij}(k) = [B(k)]_{ij}$, $q_{ij}(k) = [Q_{\eta}(k)]_{ij}$, and omit the dependency on $k$ for brevity, except where needed to avoid confusion. 
	
	By means of the RGP~\eqref{eq:3} and noting that $A,B \in \mathrm{stoch}^{2}(\mathbb{R}^{n \times n})$, we observe that
	\begin{align}
		W(x(k+\rho)) &= W(B(k)x(k))\nonumber \\
		&= \underset{i = 1,\ldots,n}{\max} \{\textstyle\sum_{l=1}^{n}b_{il}x_{l}\} - \underset{i = 1,\ldots,n}{\min} \{ \textstyle\sum_{l=1}^{n} b_{il}x_{l}\} \nonumber\\
		&= \max_{i = 1,\ldots,n} \{ \textstyle\sum_{l=1, l\neq j}^{n} b_{il}x_{l} + b_{ij} x_{j}\}  \nonumber\\
		&\quad -\underset{i = 1,\ldots,n}{\min} \{ \textstyle\sum_{l=1, l\neq j}^{n} b_{il}x_{l} + b_{ij} x_{j}\} \nonumber\\
		&\leq \max_{i = 1,\ldots,n} \{ (1-b_{ij}) x_{M} + b_{ij} x_{j}\} \nonumber\\
		&\quad - \underset{i = 1,\ldots,n}{\min} \{ (1-b_{ij}) x_{m} + b_{ij} x_{j}\},  \label{eq:Wineq1}
	\end{align}
	where $x_{M}(k) = \max_{i=1,\ldots,n} x_{i}(k)$ and $x_{m}(k) = \min_{i=1,\ldots,n} x_{i}(k)$. Since index $j$ can be generally chosen in $\{1,\ldots,n\}$ w.l.o.g. for inequality \eqref{eq:Wineq1} to hold, it follows that
	\begin{align}\label{eq:Wineq2}
		W(x(k+\rho)) &\leq \underset{j=1,\ldots,n}{\min} \left\lbrace \max_{i = 1,\ldots,n} \{ (1-b_{ij}) x_{M} + b_{ij} x_{j}\} \right\rbrace \nonumber\\
		&\quad - \underset{j=1,\ldots,n}{\max} \left\lbrace \min_{i = 1,\ldots,n} \{ (1-b_{ij})x_{m} + b_{ij} x_{j}\}  \right\rbrace \nonumber\\
		&= (1-b_{\bar{i}\bar{j}})(x_{M}-x_{m}) + b_{\bar{i}\bar{j}} (x_{\bar{j}}-x_{\bar{j}}) \nonumber \\
		&= (1-b_{\bar{i}\bar{j}}) (x_{M}-x_{m}),
	\end{align}
	where $\bar{i} \in \{1,\ldots,n\}$ and $\bar{j} \in \{1,\ldots,n\}$ are such that $b_{\bar{i}\bar{j}} = \max_{j=1,\ldots,n}\min_{i=1,\ldots,n} \{b_{ij}\}$.
	It is worth noticing that $b_{\bar{i}\bar{j}}(k) = f_{\bar{i}\bar{j}}(k,\rho-1) > 0$ for all $k=0,1,2,\ldots$ because at least $\rho$ factors $Q_{\eta}(k),\ldots,Q_{\eta}(k+\rho-1)$ are needed in order to obtain one positive column in the product matrix $F(k,\Delta k)$. The latter property is indeed ensured by the fact that the underlying graph $\Gmc$ is connected and each vertex in $\Gmc$ has a self-loop $q_{ii}(k) \in (0,1)$ for all $i=1,\ldots,n$ at any given $k$. 
	In particular, by setting
	\begin{equation}\label{eq:epsilondef}
		\epsilon := \underset{k=0,1,2,\ldots}{\inf} \left\lbrace \underset{i =1,\ldots,n}{\min} \left\lbrace  \underset{j \in \overline{\Nmc}_{i}}{\min}~ \left\lbrace q_{ij}(k)\right\rbrace \right\rbrace \right\rbrace ,
	\end{equation}
	one has $q_{ij}(k) \geq \epsilon, q_{ij}(k+1) \geq \epsilon, \ldots, q_{ij}(k+\rho-1) > \epsilon$ and, since each row of $Q_{\eta}(k)$ contains coefficients that perform a convex combination, it holds that
	$f_{\bar{i}\bar{j}}(k,0) \geq \epsilon, f_{\bar{i}\bar{j}}(k,1) \geq \epsilon^{2}, \ldots, f_{\bar{i}\bar{j}}(k,\rho-1) = b_{\bar{i}\bar{j}}(k) \geq \epsilon^{\rho}$ for all $k=0,1,2,\ldots$ 
	
	In this direction, we show that $\epsilon \in (0,1)$ leveraging Lem.~\ref{prop:metropolis} and, in particular, the connectedness of the given graph.\\
	Firstly, we find a lower bound $\underline{\epsilon} \in (0,1)$ for expression \eqref{eq:epsilondef}, such that $\epsilon \geq \underline{\epsilon}$. Recalling Prop. \ref{prop:etabounds}, assume that $\eta(k)$ ranges over the interval $ [\underline{\eta},\overline{\eta}] \subseteq [\eta_{L},\eta_{H}]$ as $k$ varies. Then for the diagonal elements of $Q_{\eta}$ one has $q_{ii} = \eta+(1-\eta) p_{ii} \geq \underline{\eta}+(1-\underline{\eta}) \underline{\xi}$, where $\underline{\xi}$ is defined in \eqref{eq:minentryP}. Whereas, for the off-diagonal elements of $Q_{\eta}$ one has $q_{ij} = (1-\eta)p_{ij} \geq (1-\overline{\eta})\underline{\xi}$, with $i \neq j$. As a consequence, it is obtained $\underline{\epsilon} := (1-\eta_{H})\underline{\xi} \leq (1-\overline{\eta})\underline{\xi} = \min((1-\overline{\eta})\underline{\xi}, \underline{\eta}+(1-\underline{\eta}) \underline{\xi})$, where $\underline{\epsilon} > 0$ and $\eta_{H} \in (0,1)$ is defined as in \eqref{eq:etaHdef}. \\
	Similarly, we find an upper bound $\overline{\epsilon} \in (0,1)$ for expression \eqref{eq:epsilondef}, such that $\epsilon \leq \overline{\epsilon}$. Then for the diagonal elements of $Q_{\eta}$ one has $q_{ii} = \eta+(1-\eta) p_{ii} \leq \overline{\eta}+(1-\overline{\eta}) \overline{\xi}$, where $\overline{\xi}$ is defined in \eqref{eq:maxentryP}. For the off-diagonal elements of $Q_{\eta}$ one has $q_{ij} = (1-\eta)p_{ij} \leq (1-\underline{\eta})\overline{\xi}$, with $i \neq j$. As a consequence, it is obtained $\overline{\epsilon} := \overline{\xi}+(1-\overline{\xi})\eta_{H} = \eta_{H}+(1-\eta_{H})\overline{\xi} \geq \overline{\eta} + (1-\overline{\eta})\overline{\xi} = \max((1-\underline{\eta})\overline{\xi}, \overline{\eta}+(1-\overline{\eta}) \overline{\xi})$, where $\underline{\epsilon} < \overline{\epsilon} < 1$.
	
	At the light of the previous computations, inequality \eqref{eq:Wineq2} can be rewritten as
	\begin{align}\label{eq:Wineq3}
		W(x(k+\rho)) &\leq r W(x(k)), 
	\end{align}
	where the scalar 
	\begin{equation}\label{eq:rdef}
		r = 1-\epsilon^{\rho} \in (0,1)
	\end{equation}
	can be related to the convergence of the RGP dynamics~\eqref{eq:3}, namely, the lower the value of $r$, the faster the convergence of RGP towards average consensus. 
	Indeed, inequality \eqref{eq:Wineq3} guarantees that the agreement for RGP \eqref{eq:3} is fulfilled with exponential decay dictated by \eqref{eq:rdef}: choosing a reference instant $k$, $\forall k \in \mathbb{N}$, it holds that 
	\begin{equation}\label{eq:consensusshown}
		0 \leq \underset{l\rightarrow \infty}{\lim} W(x(k+l \rho)) \leq \underset{l\rightarrow \infty}{\lim} r^{l} W(x(k)) = 0,
	\end{equation}
	with $l\in \mathbb{N}$.
	The latter inequality leads to \eqref{eq:W0converging} by choosing $k=0$: namely, we are observing the system at time instants separated by $\rho$ sampling periods.
	
	Now, we show that the consensus for the RGP \eqref{eq:3} is reached in terms of arithmetic mean of the initial conditions, i.e. the final value $x_{\infty}$ taken by $x(k)$, as $k$ goes to infinity, has the form
	\begin{equation}\label{eq:consaverexpl}
		x_{\infty}: =  \underset{\Delta k\rightarrow\infty}{\lim} F(0,\Delta k) x(0) = \alpha \ones_{n}.
	\end{equation}
	Let us define the matrix $M := \underset{\Delta k\rightarrow\infty}{\lim} F(0,\Delta k)$. 
	Since the agreement is guaranteed to be reached by virtue of~\eqref{eq:consensusshown} and because any $F(k,\Delta k) \in \mathrm{stoch}^{2}(\mathbb{R}^{n \times n})$, for all $ k, \Delta k$, then $M$ exists finite, is doubly-stochastic and its form is yielded by $M=\begin{bmatrix} m_1 \ones_n & \cdots & m_n \ones_n \end{bmatrix}$,
	where $\sum_{j=1}^{n} m_{j} = 1$ and $\sum_{i=1}^{n} m_{j} = 1$, for all $j=1,\ldots,n$. 
	From the structure of $M$, it immediately follows that $m_{j}=n^{-1}$, for all $j=1,\ldots,n$. Therefore, $M$ can be decomposed as $M = n^{-1} \ones_{n} \ones_{n}^{\top} = \alpha \ones_{n} $ leading to the fact that \eqref{eq:consaverexpl} can be rewritten as $x_{\infty} = Mx(0)$ or, equivalently, as in \eqref{eq:averageconsensusdef}.
	
	Finally, upper and lower bounds $\overline{r}$ and $\underline{r}$ in \eqref{eq:upperboundr} and \eqref{eq:lowerboundr} are derived by assigning $\overline{r} := 1-\underline{\epsilon}^{\rho}$ and $\underline{r} := 1-\overline{\epsilon}^{\rho}$.
\end{proof}

The following remarks conclude the discussion on the convergence analysis.
\begin{remark}
	The approximate convergence rate $r$ considered in Thm. \ref{thm:convergence} can be estimated by taking $\hat{r} \in [\underline{r},\overline{r}] \subseteq (0,1)$, e.g. by setting
	\begin{equation}\label{eq:rhat}
		\hat{r} := 1-(2+d_{M}-d_{m})^{-\rho} \in [0.5,1).
	\end{equation}
	Expression \eqref{eq:rhat} is obtained after averaging $\underline{\epsilon}$ and $\overline{\epsilon}$ that appear in the proof of Thm.~\ref{thm:convergence} through the convex combination $(1-\beta)\underline{\epsilon} + \beta \overline{\epsilon}$, choosing $\beta := (2+d_{M}-d_{m})^{-1}$.
	Notice that $\hat{r} \geq 0.5$, as $\Gmc$ is connected with at least $n\geq 2$ channels, implying that 
	$\rho \geq 1$.
	Thus, being an estimation of $r$, $\hat{r}$ serves as a purely topological index that can roughly measure how quickly this convergence takes place. Alternatively, index 
	\begin{equation}\label{eq:convratenomaxcons}
			\hat{R} := \left(1+\dfrac{d_{M}-d_{m}}{2}\right)^{\rho} \geq 1
		\end{equation}
		can be used for the same purpose, as $\hat{R} \propto \hat{r}$. From \eqref{eq:convratenomaxcons}, the quantity
		$R = \phi \hat{R}$ in \eqref{eq:Rconvergence} is finally suggested.
\end{remark}
\begin{remark}
	The detrending operation applied on the initial conditions helps reducing the value of $\overline{r}$ in \eqref{eq:upperboundr}, as $\overline{r} \propto \left\|x(0)\right\|_{\infty}$. Therefore, the adoption of Asm. \ref{ass} also allows to (likely) improve the convergence rate of Alg. \ref{alg:1}.
\end{remark}

\section*{Acknowledgements}
The authors are pleased to acknowledge the role of the Direction of Consorzio Bonifica Veneto Orientale in supporting the project and that of the technical staff in 
providing the topological and morphological data of the Cavallino OCN.

\bibliographystyle{IEEEtran}
\bibliography{fullbiblio} 
\vspace{-19pt}
\newpage

\begin{IEEEbiography}[{\includegraphics[width=1in,height=1.25in,clip,keepaspectratio]{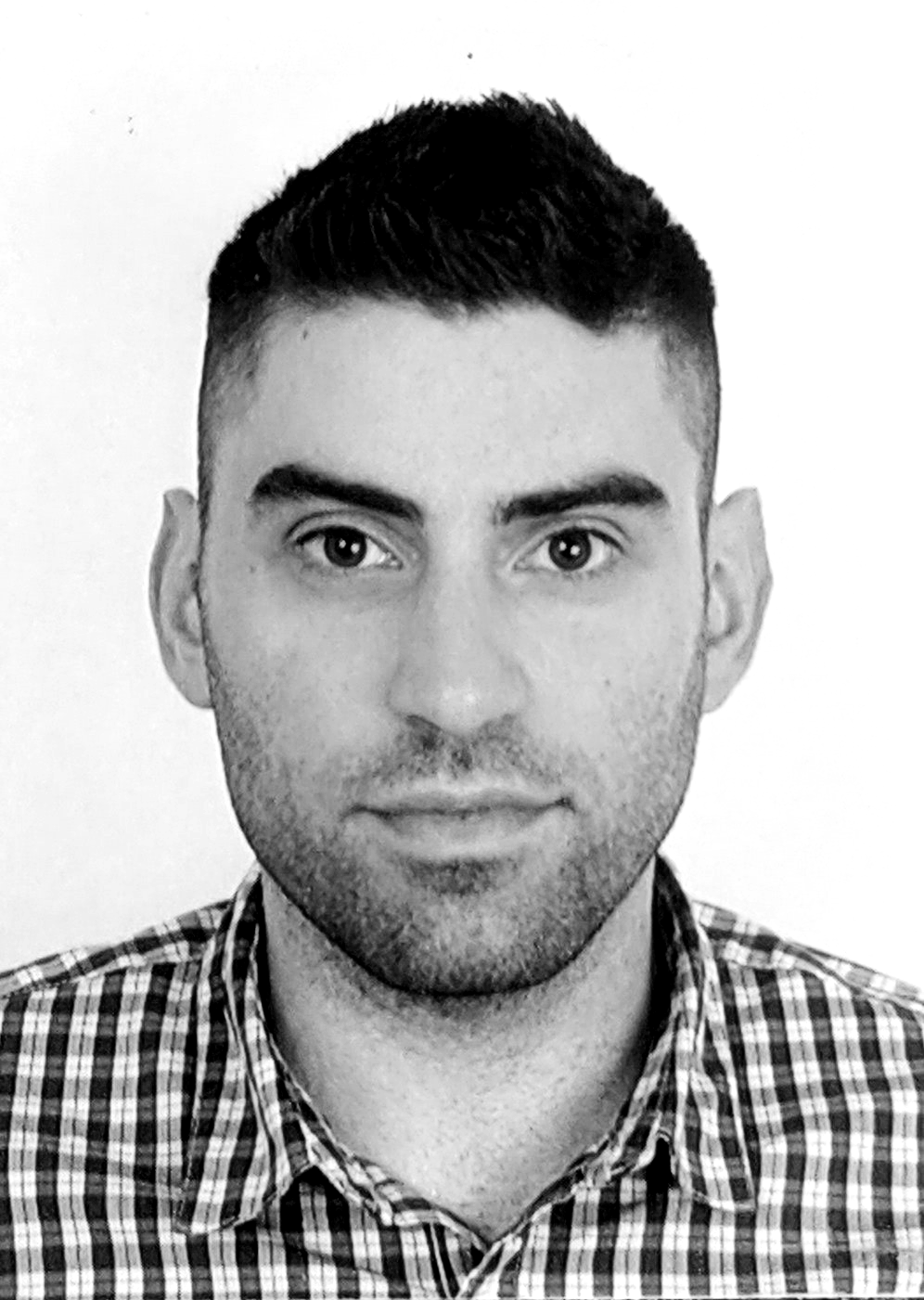}}]{Marco Fabris} received the Laurea (M.Sc.)
	degree (with honors) in Automation Engineering
	and his PhD both from the University
	of Padua, in 2016 and 2020, respectively.
	In 2018, he spent six months at the
	University of Colorado Boulder, USA, as a
	visiting scholar, focusing on distance-based
	formation tracking. In 2020-2021, he was post-doctoral fellow at the Technion-Israel Institute of Technology, Haifa. From January 2022 to July 2023 he was post-doctoral fellow at the University of Padua. He is now currently research fellow with the University of Padua and his
	main research interests involve graph-based consensus theory, resilient networks,
	optimal decentralized control and estimation for networked
	systems, data-driven predictive control and mobility as a service.
	\vspace{-12pt}
\end{IEEEbiography}

\begin{IEEEbiography}
	[{\includegraphics[width=1.05in,height=1.5in,clip,keepaspectratio]{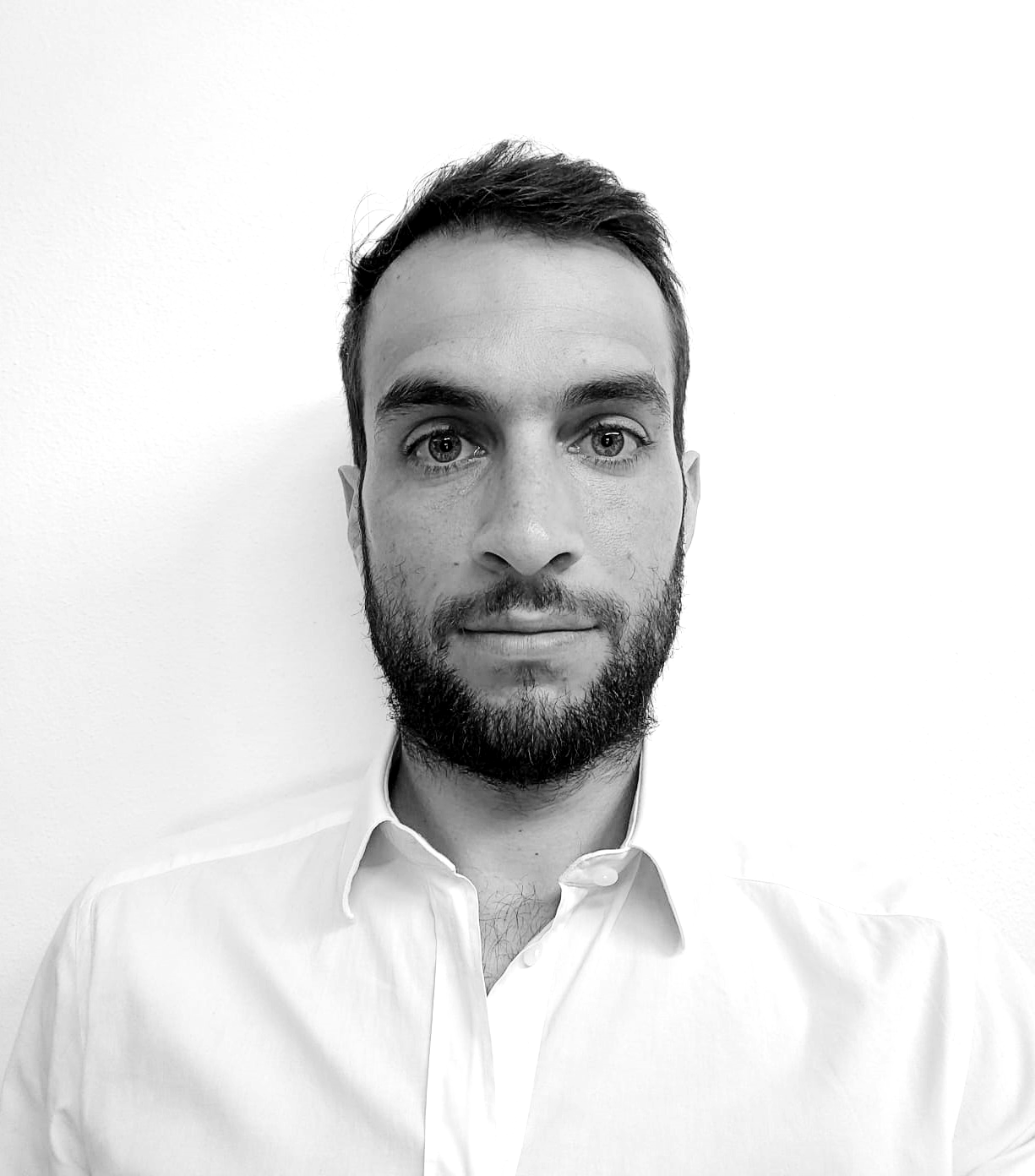}}]{Marco Davide Bellinazzi} received two Laurea (M.Sc.) degrees both from the University of Padua, one in Sport Sciences and the other on Automation Engineering, respectively in 2014 and 2022. The efforts spent during his last master thesis have contributed to achieve novel results with regard to the decentralized control over water networks by employing time-varying consensus techniques.
	\vspace{-12pt}
\end{IEEEbiography}

\vspace{-9pt}
\begin{IEEEbiography}[{\includegraphics[width=1.05in,height=1.5in,clip,keepaspectratio]{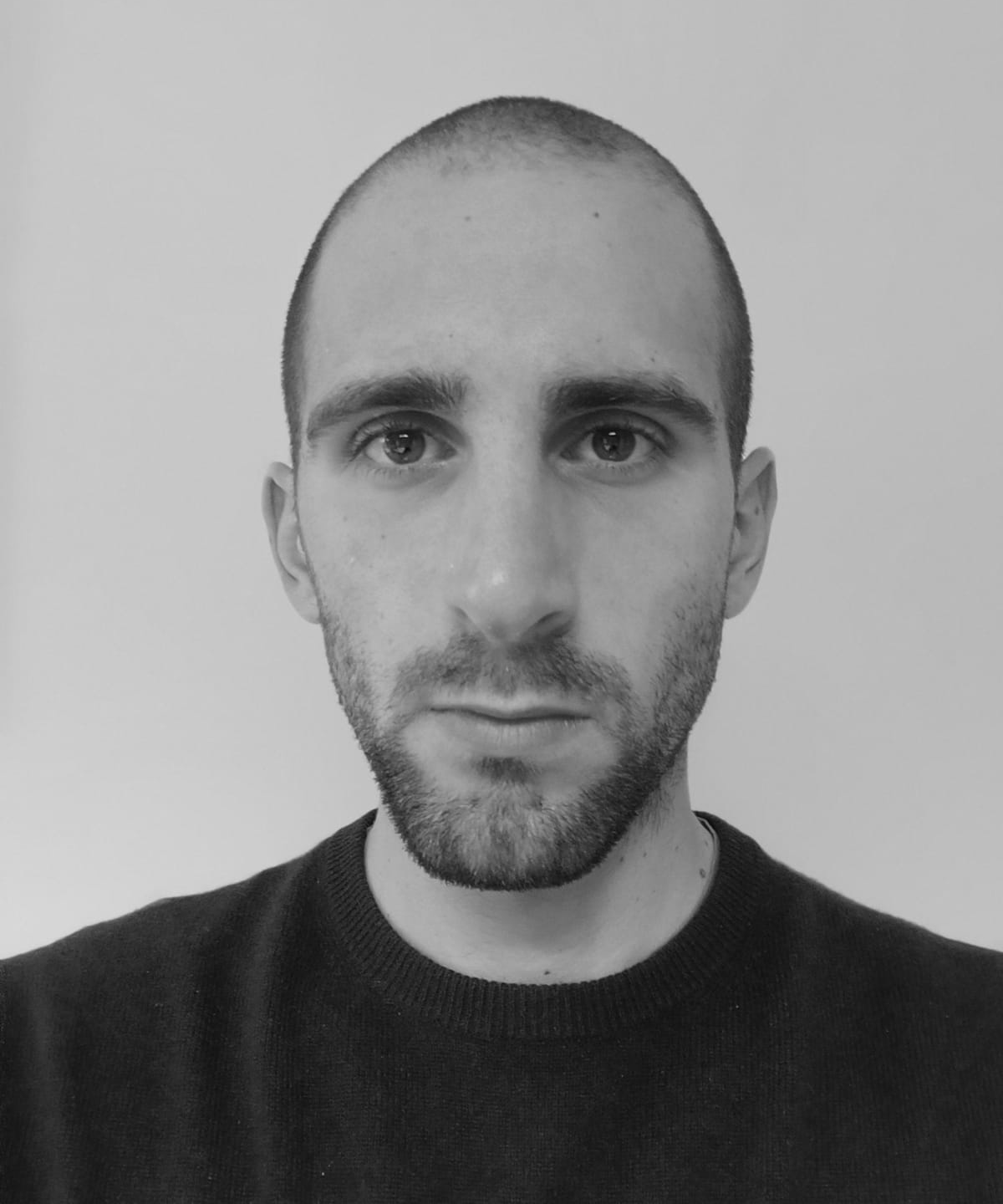}}]{Andrea Furlanetto} received the Laurea (M.Sc.) degree in Civil Engineering and his Second Level Master in GIScience and UAV, both from the University of Padua, in 2017 and 2020, respectively. Since 2019, he has worked with Consorzio di Bonifica Veneto Orientale, as a trainee first and as an employee then, focusing on remote sensing analysis and working on projects related to land reclamation and irrigation, especially for the topographic surveys aspects.
	\vspace{-12pt}
\end{IEEEbiography}


\vspace{-9pt}
\begin{IEEEbiography}[{\includegraphics[width=1in,height=1.25in,clip,keepaspectratio]{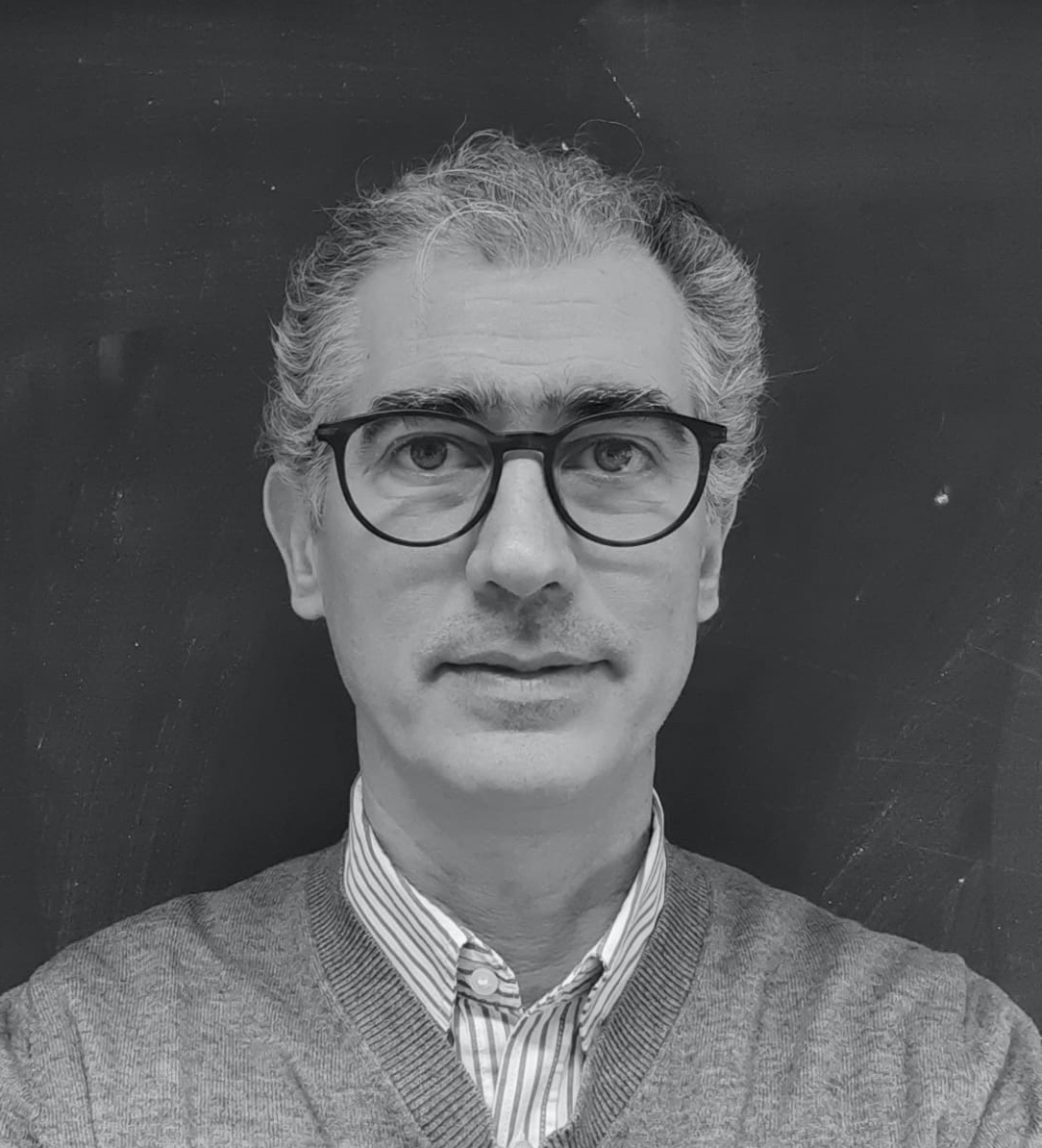}}]{Angelo Cenedese} received the M.Sc. (1999) and the Ph.D. (2004) degrees from the University of Padova, Italy, where he is currently an Associate Professor with the Department of Information Engineering. He is founder and leader of the research group SPARCS 
	and he has been and is involved in several projects on control of complex systems, funded by European and Italian government institutions and industries.
	His research interests include system modeling, control theory and its applications, mobile robotics, multi agent systems. On these subjects, he has published more than 180 papers and holds three patents.
\end{IEEEbiography}

\EOD

\end{document}